\newtheorem{theorem}{Theorem}
\newtheorem{proposition}{Proposition}
\newtheorem{claim}{Claim}
\newtheorem{corollary}{Corollary}
\theoremstyle{definition}
\newtheorem{definition}{Definition}
\newtheorem{example}{Example}
\begin{document}

\title{Cost Sharing Public Project with Minimum Release Delay}
\author{\name Mingyu Guo \email mingyu.guo@adelaide.edu.au\\
       \name Diksha Goel \email diksha.goel@adelaide.edu.au\\
       \name Guanhua Wang \email guanhua.wang@adelaide.edu.au\\
       \name Yong Yang \email yong.yang@student.adelaide.edu.au\\
       \name Muhammad Ali Babar \email ali.babar@adelaide.edu.au\\
       \addr School of Computer Science, University of Adelaide, Australia}
% \author{Mingyu Guo
% \and Diksha Goel \and Guanhua Wang \and Yong Yang \and Muhammad Ali Babar}
% \institute{M. Guo, D. Goel, G. Wang, Y. Yang, M. Babar
%   \at School of Computer Science\\University of Adelaide, Australia\\
%     \email{\{mingyu.guo, diksha.goel, guanhua.wang, ali.babar\}@adelaide.edu.au\\yong.yang@student.adelaide.edu.au}
% }
\date{}
\maketitle

\begin{abstract}

    We study the excludable public project model where the decision is binary
    (build or not build).  In a classic excludable and binary public project
    model, an agent either consumes the project in its whole or is completely
    excluded. We study a setting where the mechanism can set different project
    release time for different agents, in the sense that high-paying agents can
    consume the project earlier than low-paying agents. The release delay,
    while hurting the social welfare, is implemented to incentivize payments to cover the project cost.
    The mechanism
    design objective is to minimize the {\em maximum release delay} and the {\em total release delay} among all agents.

    We first consider the setting where we know the prior distribution of the
    agents' types. Our objectives are minimizing the {\em expected} maximum
    release delay and the {\em expected} total release delay.
    We propose the {\em single deadline mechanisms}.  We show that the optimal single
    deadline mechanism is asymptotically optimal for both objectives,
    regardless of the prior distribution.  For small number of agents, we
    propose the {\em sequential unanimous mechanisms} by extending the {\em largest
    unanimous mechanisms} from Ohseto~\cite{Ohseto2000:Characterizations}. We
    propose an automated mechanism design approach via evolutionary computation
    to optimize within the sequential unanimous mechanisms.

    We next study prior-free mechanism design. We propose the {\em group-based
    optimal deadline mechanism} and show that it is {\em competitive} against an
    {\em undominated} mechanism under minor technical assumptions.

% \keywords{Public Project \and Cost Sharing \and Automated Mechanism Design}

\end{abstract}

\section{Introduction}

The public project problem is a fundamental mechanism design model with many
applications in multiagent systems.  The public project problem involves
multiple agents, who need to decide whether or not to build a public project.
The project can be {\bf nonexcludable} ({\em i.e.}, if the project is built,
then every agent gets to consume the project, including the non-paying agents/free riders) or
{\bf excludable} ({\em i.e.}, the setting makes it possible to exclude some
agents from consuming the project)~\cite{Ohseto2000:Characterizations}.\footnote{An
example nonexcludable public project is a public airport, and an example
excludable public project is a gated swimming pool.}
A public project can be
{\bf indivisible/binary} or {\bf
divisible}~\cite{Moulin1994:Serial}.  A binary public project is either
built or not built ({\em i.e.}, there is only one level of provision).
In a divisible public project, there are multiple
levels of provision ({\em i.e.}, build a project with adjustable quality).

In this paper, we study an excludable public project model that is
``divisible'' in a different sense. In the model, the level of provision is
binary, but an agent's consumption is divisible. The mechanism specifies when
an agent can start consuming the project. High-paying agents can consume the
project earlier, and the free riders need to wait.  The waiting time is also
called an agent's {\bf delay}. The delay is there to incentivize payments.  The
model was motivated by the following cyber security information sharing scenario.
A group of agents come together to crowd fund a piece of security
information. For example, a practically happening scenario is that financial companies would purchase information regarding the latest ``hacks'' from security experts in order to prepare against them.\footnote{Price of high-value zero-day exploit can reach
millions~\cite{Greenberg2012:Shopping,Fisher2015:VUPEN}.} Based on the agents'
valuations on the information, the mechanism decides whether or not to
crowd fund this piece of information ({\em i.e.}, purchase it from the security
consulting firm that is selling this piece of information).  If we are able to
raise enough payments to cover the cost of the security information, then
ideally we would like to share it to all agents, including the free riders, in
order to maximizes the {\em overall protection of the community}.  However, if
all agents receive the information regardless of their payments, then no agents
are incentivized to pay. To address this, the mechanism releases the
information only to high-paying agents in the beginning and the
non-paying/low-paying agents need to wait for a delayed release.\footnote{Another example use case of our mechanism is to fund content creators.
For example, an author may ``kick start'' a new online novel series. If enough readers (the agents in this scenario) are willing to fund the novel by collectively covering the author's demanded cost, then the novel series will be written. Whenever a chapter is ready, those who pay will receive the new chapter right away. The free-riders will need to wait for a few days until the ``premium'' period is over.}
The mechanism
design goal is to minimize the delay as long as the delay is long enough to
incentivize enough payments to cover the cost of the information.
In this paper, we have two design objectives. One is to minimize the
{\em max-delay} ({\em i.e.}, the maximum waiting time of the agents) and the
other is to minimize the {\em sum-delay} ({\em i.e.}, the total waiting time of
the agents).

% The authors focused on {\em worst-case mechanism design} and
% proposed a mechanism that has a constant approximation ratio compared to the
% optimal mechanism.  The authors also briefly touched upon {\em expected} delay.
% The authors used simulation to show that compared to their worst-case
% competitive mechanism, the {\em serial cost sharing mechanism}~\cite{Moulin1994:Serial} has much lower expected {\em max-delay} and
% {\em sum-delay} under various distributions.

We first focus on settings where we know the prior distribution of the agents'
types.  We focus on minimizing the expected {\em max-delay} and the expected
{\em sum-delay}. We propose a mechanism family called the {\bf single deadline
mechanisms}.  For both objectives, under minor technical assumptions, we prove
that there exists a single deadline mechanism that is {\em near optimal} when
the number of agents is large, {\em regardless of the prior distribution}.
We show that when the number of agents approaches infinity, the optimal single
deadline mechanism approaches optimality asymptotically.  For small numbers of
agents, the single deadline mechanism is not optimal. We extend the single
deadline mechanisms to multiple deadline mechanisms. We also propose a genetic
algorithm based automated mechanism design approach.  We use a sequence of
offers to represent a mechanism and we evolve the sequences.  By simulating
mechanisms using multiple distributions, we show that our automated mechanism
design approach
successfully identifies well-performing mechanisms for small numbers of
agents.

We next focus on prior-free mechanism design. We propose the {\em optimal deadline
mechanism}, which we show is {\em undominated} by strategy-proof and
individually rational mechanisms. However, the optimal deadline mechanism
itself is not strategy-proof. We propose a strategy-proof variant of it, which we call the {\em group-based optimal deadline mechanism}. We show that the {\em group-based optimal deadline mechanism} is {\em competitive} (i.e., the delay is at most larger by a constant factor) against the undominated optimal deadline mechanism under minor technical assumptions.

\section{Related Research}

Ohseto~\cite{Ohseto2000:Characterizations} characterized
all strategy-proof and individually rational mechanisms for the binary public
project model (both excludable and nonexcludable), under minor technical
assumptions.
% For the nonexcludable model, the only strategy-proof and
% individually rational mechanisms are the {\em unanimous mechanisms}. For the
% excludable model, the only strategy-proof and individually rational mechanisms
% are the {\em largest unanimous mechanisms}.
Deb and
Razzolini~\cite{Deb1999:Voluntary} further showed that on top of Ohseto's
characterization, if we require {\em equal treatment of equals} ({\em i.e.}, if
two agents have the same type, then they should be treated the same), then the
only strategy-proof and individually rational mechanisms are the {\em conservative equal cost mechanism} (nonexcludable) and the
{\em serial cost sharing mechanism} (excludable), which were both proposed by
Moulin~\cite{Moulin1994:Serial}. It should be noted that Ohseto's
characterization involves {\em exponential} number of parameters, so
knowing the characterization does not mean it is easy to locate
good mechanisms.  Wang {\em et al.}~\cite{Wang2021:Mechanism}
proposed a neural network based approach for optimizing within Ohseto's characterization
family.
% the largest
% unanimous mechanism family.
The authors studied two objectives: maximizing the
 number of consumers and maximizing the social welfare.
 It should be noted that Ohseto's characterization does not apply to the model
 in this paper, as our model has an additional spin that is the release delay.
 In this paper, for minimizing the expected delay, we propose a family of mechanisms called the sequential
 unanimous mechanisms, which is motivated by Ohseto's characterization.
% Guo {\em et al.}~\cite{Guo2018:Cost} proposed a modified version of the binary
% excludable public project model.
% In the original binary excludable public project model, an agent either consumes
% the project or is excluded. In Guo {\em et al.}'s model, the mechanism sets
% a premium period. High-paying agents can consume the premium period, while the
% low-paying/no-paying agents need to wait.
% This paper focuses on this modified model.
% The formal model
% description will be presented in Section~\ref{sec:model}.
% Guo {\em et al.} proposed a mechanism called {\em group-based cost
% sharing with optimal deadline} (GCSOD). The authors showed that GCSOD is
% competitive against the optimal mechanism in the worst case. The authors
% also showed that the GCSOD mechanism does not perform well in expectation.
% The GCSOD mechanism's expected performance is worse than the serial cost
% sharing mechanism proposed by Moulin~\cite{Moulin1994:Serial}.
% Finally, the last part of this paper introduces an evolutionary computational
% automated mechanism design approach, which successfully identifies better
% mechanisms than our manually designed single deadline mechanism.
 We apply a genetic algorithm for tuning the sequential unanimous mechanisms.
Mechanism
design via evolutionary computation~\cite{Phelps2010:Evolutionary} and
mechanism design via other computational means (such as linear
programming~\cite{Conitzer2002:Complexity} and neural
networks~\cite{Duetting2019:Optimal,Shen2019:Automated,Wang2021:Mechanism})
have long been shown to be effective for many design settings.

% \subsection{Serial Cost Sharing(SCS)}
% Introduction
% The typical traditional methods is serial cost sharing(SCS)\cite{Moulin1994:Serial,Guo2018:Cost}.
% For every nonempty subset of  agents $S$ with $|S|=k$, the cost share vector is    $(\frac{1}{k},\frac{1}{k},\ldots,\frac{1}{k})$.  The mechanism picks the
%     largest coalition where the agents are willing to pay equal shares. \\
% \begin{itemize}
% \item If S is empty, then the good will not build, every agent's can't use it ( release date $t_i$ = 0) and pays 0( $p_i = 0$).
% \item If S is not empty, then the highest  $k^*$ = max $k$  agents each pays 1/$k^*$ ( $p_i = 1/k^*$).and the other can use the goods immediately (release date $t_i$ = 1). The other agents use the goods at time 1 (release date $t_i$ = 0)and each pay 0( $p_i = 0$).
% \end{itemize}

% Moulin (1994)\cite{Moulin1994:Serial} considers truthful mechanisms for excludable public good problems.

\section{Model Description}\label{sec:model}

There are $n$ agents who decide whether or not to build a public project. The
project is binary (build or not build) and nonrivalrous (the cost of the
project does not depend on how many agents are consuming it). We normalize the
project cost to $1$. Agent $i$'s type $v_i \in [0, 1]$ represents her private
valuation for the project. We use $\vec{v}=(v_1,v_2,\ldots,v_n)$ to denote the type profile.
In Section~\ref{sec:expected}, we focus on mechanism design assuming a known
prior distribution of the agents' types. Specifically, we assume that the $v_i$ are drawn {\em
i.i.d.}, with $f$ being the probability
density function.  For technical reasons, we assume
$f$ is {\em positive} and {\em Lipschitz continuous over $[0,1]$}.

We assume that the public project has value over a time period $[0,1]$.
For example, the project could be a piece of security information that is discovered
at time $0$ and the corresponding exploit expires at time $1$.  We
assume the setting allows the mechanism to specify each agent's release time for the project, so that some agents can consume the project earlier than the others.  Given a type
profile, a mechanism outcome consists of two vectors: $(t_1,t_2,\ldots,t_n)$
and $(p_1,p_2,\ldots,p_n)$. {\em I.e.}, agent $i$ starts consuming the project
at time $t_i \in [0,1]$ and pays $p_i\ge 0$.  $t_i=0$ means agent $i$ gets to consume
the public project right from the beginning and $t_i=1$ means agent $i$ does not
get to consume the public project.
We call $t_i$ agent $i$'s {\em release time}.
We assume the agents' valuations over the time
period is uniform. That is, agent $i$'s valuation equals $v_i(1-t_i)$, as she
enjoys the time interval $[t_i,1]$, which has length $1-t_i$. Agent $i$'s utility
is then $v_i(1-t_i)-p_i$.
We impose the following mechanism design constraints:

\begin{itemize}

    \item Strategy-proofness: We use $t_i$ and $p_i$ to denote agent $i$'s release
        time and payment when she reports her true value $v_i$. We use $t_i'$ and $p_i'$ to denote agent $i$'s release
        time and payment when she reports a false value $v_i'$. We should have
        \[v_i(1-t_i)-p_i \ge v_i(1-t_i')-p_i'\]

    \item Individual rationality: $v_i(1-t_i)-p_i\ge 0$

    \item Ex post budget balance:

        {\em If the project is not built}, then no agent
        can consume the project and no agent pays. That is, we must have $t_i=1$ and $p_i=0$ for all $i$.

        {\em If the project is built}, then the agents' total payment must cover
        exactly the project cost. That is, $\sum_i p_i=1$.

\end{itemize}

In Section~\ref{sec:expected} and \ref{sec:prior-free}, we study different
mechanism design objectives. We defer the formal definitions of our mechanism design objectives
to the specific sections.

\section{Minimizing Expected Max-Delay and Sum-Delay}
\label{sec:expected}

In this section, we design mechanisms that minimize the following design objectives:

\begin{itemize}
    \item Expected {\em Max-Delay}: $E_{v_i\sim f}\left(\max\{t_1,t_2,\ldots,t_n \}\right)$
    \item Expected {\em Sum-Delay}: $E_{v_i\sim f}\left(\sum_{i}t_i\right)$
\end{itemize}

\subsection{Single Deadline Mechanisms}

We first describe the {\em serial cost sharing mechanism (SCS)} proposed by
Moulin~\cite{Moulin1994:Serial}.  The SCS mechanism was proposed for the
classic binary excludable model, so the concept of release time does not exist
in its original definition.  In the context of our model, under SCS, an agent's
release time is either $0$ or $1$. ($0$ means the agent gets to consume
the project from the beginning in its whole, and $1$ means that the agent does
not get to consume the project.)

Let $\vec{v}$ be the type profile.
We first define the following functions:

    \[I(\vec{v})=\begin{cases}
        1 & \exists k \in \{1,2,\ldots,n\}, k \le |\{v_i|v_i\ge \frac{1}{k}\}| \\
        0 & \text{otherwise}
   \end{cases}
    \]

$I(\vec{v})$ equals $1$ if and only if there exist at least $k$ values among $\vec{v}$
that are at least $\frac{1}{k}$, where $k$ is an integer from $1$ to $n$.

    \[K(\vec{v})=\begin{cases}
        \max\{k| k \le |\{v_i|v_i\ge \frac{1}{k}\}|,k\in \{1,2,\ldots,n\}\} & I(\vec{v})=1\\
        0 & I(\vec{v})=0
   \end{cases}
    \]

Given $\vec{v}$, there could be multiple values for $k$, where there exist at
least $k$ values among $\vec{v}$ that are at least $\frac{1}{k}$. $K(\vec{v})$
is the largest value for $k$. If such a $k$ value does not exist, then
$K(\vec{v})$ is set to $0$.

\begin{definition}[Serial Cost Sharing Mechanism~\cite{Moulin1994:Serial}]

    Given $\vec{v}$, let $k=K(\vec{v})$.

    \begin{itemize}

        \item
    If $k>0$, then agents with the highest $k$ values are the consumers.
            The consumers pay $\frac{1}{k}$. The non-consumers do not pay.

        \item
    If $k=0$, then there are no consumers and no agents pay.

    \end{itemize}

\end{definition}

In the context of our model, consumers' release time/delay is $0$
and non-consumers' release time/delay is $1$.
Essentially, the serial cost sharing mechanism finds the largest $k$ where $k$
    agents are willing to equally split the cost. If such a $k$ exists, then we say {\em the cost share is successful} and these $k$ agents are {\em joining the cost share}.
    If such a $k$ does not exist, then we say {\em the cost share failed}.

Next we introduce a new mechanism family called the single deadline mechanisms, which
use the serial cost sharing mechanism as a sub-component.

\begin{definition}[Single Deadline Mechanisms]

A single deadline mechanism is characterized by one parameter $d \in [0,1]$.
    $d$ is called the mechanism's {\bf deadline}.
    We use $M(d)$ to denote the single deadline mechanism with deadline $d$.

    The time interval before the deadline $[0,d]$ is called the {\bf non-free}
    part. The time interval after the deadline $[d,1]$ is called the {\bf free}
    part.

    We run the serial cost sharing mechanism on the non-free part as follows.  For the
    non-free part, the agents' valuations are
    $d\vec{v}=(dv_1,\ldots,dv_n)$.
    % $d\vec{v}$ is the product between a real value $d$ and a vector $\vec{v}$.
    Let $k=K(d\vec{v})$.
    Agents with the highest $k$ values get to consume the non-free part, and
    each needs to pay $\frac{1}{k}$.

    The free part is allocated to the agents for free. However, we cannot
    give out the free part if the public project is not built.
    % That is, we give out the free part only if $I(d\vec{v})=1$.

    If we give out the free part if and only if $I(d\vec{v})=1$,
    then the mechanism is not strategy-proof, because the free parts change the
    agents' strategies.\footnote{For example, an agent may over-report to turn an
    unsuccessful cost share into a successful cost share, in order to claim the
    free part.}
    Instead, we give agent $i$ her free part if and only if $I(dv_{-i})=1$.
    That is, agent $i$ gets her free part if and only if the other agents
    can successfully cost share the non-free part without $i$.

    If an agent receives both the non-free part and the free part, then her release
    time is $0$. If an agent only receives the free part, then her release time
    is $d$. If an agent does not receive either part, then her release time is $1$.
    Lastly, if an agent only receives the non-free part, then her release time
    is $1-d$, because such an agent's consumption interval should have length $d$ ({\em i.e.}, $[1-d,1]$).
\end{definition}

\begin{proposition}
    The single deadline mechanisms are strategy-proof, individually rational, and
    ex post budget balanced.
\end{proposition}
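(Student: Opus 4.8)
The plan is to verify the three properties separately, treating ex post budget balance and individual rationality as short case analyses and spending the real effort on strategy-proofness. For budget balance, the first step is the monotonicity observation that $I(dv_{-i})=1$ forces $I(d\vec v)=1$: the $k$ agents among the others who witness a successful cost share without $i$ are still present in the full profile, so they witness $I(d\vec v)=1$ as well. Consequently the free part is ever handed out only when the non-free cost share already succeeds, and the project is built exactly when $I(d\vec v)=1$. In that case $k=K(d\vec v)\ge 1$ consumers each pay $\frac1k$, which sums to $1$; when $I(d\vec v)=0$ we have $k=0$ and, by the monotonicity fact, $I(dv_{-i})=0$ for every $i$, so no agent consumes or pays and $t_i=1,\ p_i=0$ for all $i$. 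This gives exact budget balance in both regimes.

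For individual rationality I would enumerate the four outcome types and compute $v_i(1-t_i)-p_i$ in each. The free-only case yields $v_i(1-d)\ge 0$ and the no-part case yields $0$, both trivially nonnegative. The two consumer cases are the only substantive ones: in the both-parts case the utility is $v_i-\frac1k$ and in the non-free-only case it is $dv_i-\frac1k$, and here I would invoke that any consumer of the non-free part satisfies $dv_i\ge\frac1k$ (she is among the top $k$ with scaled value at least $\frac1k$), together with $d\le 1$, to conclude both quantities are nonnegative.

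The heart of the proof is strategy-proofness, and the plan hinges on the deliberate design feature that agent $i$'s free part is triggered by $I(dv_{-i})$, which is independent of her own report. Writing $F_i\in\{0,1\}$ for the free-part indicator, I would establish the exact additive decomposition of agent $i$'s utility, $U_i=F_i\cdot v_i(1-d)+\mathbf{1}\{i\text{ consumes the non-free part}\}\,(dv_i-\tfrac1k)$, by checking that the marginal value of becoming a non-free consumer equals the same constant $dv_i-\frac1k$ whether $F_i=0$ (utility moves from $0$ to $dv_i-\frac1k$) or $F_i=1$ (utility moves from $v_i(1-d)$ to $v_i-\frac1k$). The second summand is precisely the utility agent $i$ would receive in Moulin's serial cost sharing mechanism run on the scaled valuations $d\vec v$ with true value $dv_i$ and report $dv_i'$. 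Since $F_i$ does not depend on $v_i'$, maximizing $U_i$ is equivalent to maximizing this SCS term, so truthfulness reduces to strategy-proofness of SCS under the order-preserving rescaling $v_i\mapsto dv_i$; the degenerate case $d=0$ is immediate since nothing is allocated.

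I expect the additive decomposition to be the main obstacle, because it is exactly the step where the engineered release times ($0$, $d$, $1-d$, $1$) must be shown to make the free part separate out as a report-independent constant while leaving the non-free marginal value equal to $dv_i$; this is what lets strategy-proofness be inherited rather than reproved. If citing Moulin's strategy-proofness of SCS is not acceptable as a black box, I would supply it inline through the standard threshold argument: fixing the others' scaled values, an agent is made a consumer at price $\frac1k$ only when her scaled value is at least $\frac1k$, and by the monotonicity of $K$ in a single report she can neither strictly lower the price she pays while remaining a consumer nor profitably flip her consumer status, so a truthful report is optimal.
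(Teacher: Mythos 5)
Your proposal is correct and follows essentially the same route as the paper: the free part is determined by $I(dv_{-i})$, hence independent of agent $i$'s report, so the agent's incentives reduce to those of the serial cost sharing mechanism on the scaled profile $d\vec{v}$, from which strategy-proofness and individual rationality are inherited, while budget balance follows from the case split on $I(d\vec{v})$. You simply spell out the details the paper leaves implicit (the additive utility decomposition and the monotonicity fact that $I(dv_{-i})=1$ implies $I(d\vec{v})=1$), and both check out.
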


\begin{proof}
    Whether an agent receives her free part or not does not depend on her report,
    so the agents are essentially just facing a serial cost sharing mechanism, where
    the item being cost shared is $d$ portion of the public project.
    The serial cost
    sharing mechanism is strategy-proof and individually rational.
    If $I(d\vec{v})=0$, then an agent receives neither the free nor the non-free part.
    Every agent's release time equals $1$ and every agent pays $0$.
    If $I(d\vec{v})=1$, then the total payment is exactly $1$. Therefore, the
    single deadline mechanism is ex post budget balanced.
\end{proof}

We next show that the above single deadline mechanism family contains
the asymptotically
optimal mechanism for both the {\em max-delay} and the {\em sum-delay}
objectives, regardless of the prior distribution.

\subsection{Max-Delay: Asymptotic Optimality}

% In this section, we show that when the number of agents is large enough, the optimal single deadline mechanism is asymptotically optimal in terms of expected max-delay, regardless of the prior distribution.

\begin{theorem}\label{thm:maxdelay}
    The optimal single deadline mechanism's expected max-delay approaches $0$
    when the number of agents approaches infinity.
\end{theorem}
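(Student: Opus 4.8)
The plan is to exhibit a specific choice of deadline $d$, as a function of $n$, for which the expected max-delay provably tends to $0$. Since the optimal single deadline mechanism can only do better, an upper bound on the expected max-delay of one well-chosen $M(d)$ suffices. The key observation is that under $M(d)$, the max-delay is $0$ exactly when every agent receives both the free and non-free parts, it is at most $d$ when every agent receives at least the free part, and it is $1$ only in the ``bad'' event that the cost share fails in a way that leaves some agent with release time $1$. So I would bound the expected max-delay by $d \cdot \Pr[\text{some agent misses the non-free part but all get the free part}] + 1 \cdot \Pr[\text{bad event}]$, and then argue both terms vanish for a suitable $d = d(n)$.

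First I would analyze the probability that the cost share on the non-free part succeeds with a large $k$. Recall that $K(d\vec{v})$ is the largest $k$ such that at least $k$ of the scaled valuations $dv_i$ are $\ge \frac{1}{k}$, equivalently at least $k$ of the raw valuations satisfy $v_i \ge \frac{1}{dk}$. The crucial point is that because $f$ is positive on $[0,1]$, a constant fraction of agents have valuations bounded away from $0$; more precisely, for any threshold $\tau \in (0,1)$ the number of agents with $v_i \ge \tau$ concentrates around $n\Pr[v_i \ge \tau]$, which grows linearly in $n$. I would pick $d$ so that $d$ shrinks slowly (say $d \to 0$ but $dn \to \infty$), and show that with overwhelming probability $K(d\vec{v})$ is large enough that essentially all agents clear the per-capita cost $\frac{1}{K(d\vec{v})}$ after scaling. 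The engine here is a concentration/Chernoff bound on the count $|\{i : v_i \ge \tau\}|$.

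The main obstacle, and the step I expect to require the most care, is handling the interaction between the free-part rule and the deadline. An agent gets her free part only if $I(dv_{-i})=1$, i.e. the \emph{remaining} $n-1$ agents can cost share without her; and she avoids release time $1$ only if she either is among the top-$k$ consumers of the non-free part or receives the free part. I would argue that when $n$ is large, removing a single agent barely changes whether the cost share succeeds, so $I(dv_{-i})=1$ holds simultaneously for all $i$ with high probability, handing every agent at least the free part and capping each individual delay at $d$. I then need to show the residual probability that \emph{any} agent is left with delay $1$ (the bad event) decays fast enough — faster than $1/d$ — so that its contribution $1 \cdot \Pr[\text{bad}]$ also vanishes; this is where the Lipschitz and positivity assumptions on $f$ let me control the tail uniformly.

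Putting it together, with $d = d(n) \to 0$ chosen so that $dn \to \infty$, the first term is at most $d \to 0$ and the second term is a vanishing failure probability, so the expected max-delay of this particular $M(d)$ tends to $0$. Since the optimal single deadline mechanism minimizes expected max-delay over all $d$, its value is bounded above by that of our chosen $M(d)$ and hence also tends to $0$, which is exactly the claim. The delicate quantitative trade-off is the joint choice of $d$ and the threshold $\tau$ so that both the ``miss the non-free part'' probability and the ``miss the free part'' probability beat the respective penalties $d$ and $1$.
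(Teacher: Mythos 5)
Your proposal is correct and follows essentially the same route as the paper's proof: both bound the expected max-delay by $d$ plus the probability that fewer than $k+1$ agents clear the affordability threshold, both use the observation that having at least $k+1$ such agents guarantees every agent her free part under the $I(dv_{-i})$ rule (so removing one agent cannot break the cost share), and both kill the failure probability with a Chernoff/Hoeffding bound on the binomial count. The only cosmetic difference is the order of limits --- the paper fixes $d$ and a constant $k$, lets $n\to\infty$, and then uses the arbitrariness of $d$, whereas you diagonalize with $d=d(n)\to 0$, $dn\to\infty$ --- and both versions go through.
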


\begin{proof}
We consider a single deadline mechanism $M(d)$.  Every agent's valuation is drawn {\em i.i.d.} from a distribution with PDF $f$.
Let $V_i$ be the random variable representing agent $i$'s valuation.
Since $f$ is positive and Lipschitz continuous, we have that
$\forall d, \exists k, P(dV_i\ge \frac{1}{k}) >0$.
That is, for any deadline $d$, there always exists an integer $k$, where the
probability that an agent is willing to pay $\frac{1}{k}$ for the non-free part is positive.
Let $p=P(dV_i\ge \frac{1}{k})$.
We define the following Bernoulli random variable:
    \[B_i=\begin{cases}
        1 & dV_i\ge \frac{1}{k}\\
        0 & \text{otherwise}
   \end{cases}
    \]

$B_i$ equals $1$ with probability $p$. It equals $1$ if and only if agent $i$
can afford $\frac{1}{k}$ for the non-free part.
The total number of agents in $\vec{v}$ who can afford
$\frac{1}{k}$ for the non-free part then follows a Binomial distribution
$B(n,p)$. We use $B$ to denote this Binomial variable.
If $B\ge k+1$, then every agent receives the free part, because
agent $i$ receives the free part if excluding herself, there are at least $k$
agents who are willing to pay $\frac{1}{k}$ for the non-free part.
The probability that the max-delay is higher than $d$ is therefore bounded above
by $P(B\le k)$.
According to Hoeffding's inequality, when $k<np$,
$P(B\le k)\le e^{-2n\left(p-\frac{k}{n}\right)^2}$.
We immediately have that when $n$ approaches infinity, the probability
that the max-delay is higher than $d$ is approaching $0$. Since $d$
is arbitrary, we have that asymptotically, the single deadline mechanism's
expected max-delay is approaching $0$.
\end{proof}

Next, we use an example to show that when $n=500$,
the optimal single deadline mechanism's expected max-delay is close to $0.01$.
We reuse all notation defined in the proof of Theorem~\ref{thm:maxdelay}.
We make use of the Chernoff bound. When $k<np$, we have
$P(B\le k)\le e^{-nD\left(\frac{k}{n} ||p \right)}$, where $D\left(a||p\right)=a\ln\frac{a}{p}+(1-a)\ln\frac{1-a}{1-p}$.

When all agents receive the free part, the max-delay is at most $d$.
Otherwise, the max-delay is at most $1$.
The expected max-delay is at most
\[P(B\le k) + d(1 - P(B\le k)) \le P(B\le k)+d\]

\begin{example}\label{ex:max}
    Let us consider a case where $n=500$. We set $d=0.01$ and $k=250$.

    \begin{itemize}

        \item $f$ is the uniform distribution $U(0,1)$:
            We have $p=0.6$ and $P(B\le 250)\le 3.69\mathrm{e}-5$.
            $M(0.01)$'s expected max-delay is then bounded above by $0.01 + 3.69\mathrm{e}-5$.

        \item $f$ is the normal distribution $N(0.5,0.1)$ restricted to $[0,1]$:
            We have $p=0.84$ and $P(B\le 250)\le 7.45\mathrm{e}-69$.
            $M(0.01)$'s expected max-delay is then bounded above by $0.01 + 7.45\mathrm{e}-69$.

    \end{itemize}

\end{example}

On the contrary, the expected max-delay of the serial cost sharing mechanism is not approaching
$0$ asymptotically.

\begin{proposition}
    The expected max-delay of the serial cost sharing mechanism equals
    \[1-(\int_{\frac{1}{n}}^1f(x)dx)^n\]

    The above expression approaches $1-e^{-f(0)}$ asymptotically.
\end{proposition}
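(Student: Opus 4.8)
The plan is to compute the expected max-delay of the serial cost sharing mechanism directly, since under SCS every agent's release time is either $0$ (consumer) or $1$ (non-consumer), so the max-delay is a $\{0,1\}$-valued random variable. Its expectation is therefore just the probability that the max-delay equals $1$, which happens exactly when at least one agent fails to be a consumer --- equivalently, it is $1$ minus the probability that \emph{every} agent is a consumer. So the first step is to characterize the event that all $n$ agents consume the project.

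Recall that SCS makes the top $k=K(\vec v)$ agents consumers, each paying $\frac1k$. For all $n$ agents to be consumers we need $K(\vec v)=n$, which by the definition of $K$ requires that there be at least $n$ values among $\vec v$ that are at least $\frac1n$. Since there are only $n$ agents, this means \emph{every} $v_i \ge \frac1n$. Hence I would argue the equivalence: all agents are consumers $\iff$ $v_i \ge \frac1n$ for every $i$. (The forward direction is immediate; for the reverse, if all $v_i\ge\frac1n$ then $k=n$ satisfies $k\le|\{v_i\mid v_i\ge\frac1k\}|$, so $K(\vec v)=n$.) By i.i.d.\ sampling, $P(\text{all } v_i\ge\tfrac1n) = \left(\int_{1/n}^1 f(x)\,dx\right)^n$, giving the expected max-delay $1-\left(\int_{1/n}^1 f(x)\,dx\right)^n$, which is the claimed closed form.

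For the asymptotic claim, I would rewrite $\int_{1/n}^1 f(x)\,dx = 1 - \int_0^{1/n} f(x)\,dx$ and estimate the small tail integral. Because $f$ is Lipschitz continuous (hence bounded near $0$), $\int_0^{1/n} f(x)\,dx = \frac{f(0)}{n} + o\!\left(\frac1n\right)$; more precisely, Lipschitz continuity with some constant $L$ gives $|f(x)-f(0)|\le Lx$, so the integral equals $\frac{f(0)}{n}$ up to an $O(1/n^2)$ error. Substituting, the survival probability is $\left(1-\frac{f(0)}{n}+O(\frac1{n^2})\right)^n$, and taking the standard limit $\left(1-\frac{c}{n}\right)^n\to e^{-c}$ yields $e^{-f(0)}$. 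Therefore the expected max-delay approaches $1-e^{-f(0)}$.

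The main obstacle, though minor, is justifying the limit rigorously rather than heuristically: I must control the $O(1/n^2)$ error term when raising to the $n$th power, confirming that the correction vanishes in the limit. This is handled by taking logarithms and writing $n\ln\!\left(1-\frac{f(0)}{n}+O(\frac1{n^2})\right) = -f(0) + O(\frac1n) \to -f(0)$, where the Lipschitz bound supplies the explicit error control needed to make the $o(1/n)$ estimate valid. The positivity assumption on $f$ ensures the integral is nondegenerate, but it is the Lipschitz condition that does the real work here in pinning down the leading-order tail behavior.
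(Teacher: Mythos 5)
Your proof is correct and follows essentially the same route as the paper's (much terser) argument: identify that the max-delay is $1$ exactly when some agent's type falls below $\frac{1}{n}$, compute the complementary probability as $\left(\int_{1/n}^{1} f(x)\,dx\right)^n$, and use the Lipschitz continuity of $f$ to extract the limit. The only cosmetic difference is that the paper invokes L'Hospital's rule for the limit while you expand the logarithm with explicit error control, which amounts to the same calculation.
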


\begin{proof}
    The max-delay is $1$ whenever at least one agent's type is less than $\frac{1}{n}$.
    Under our assumption that $f$ is Lipschitz continuous, the limit of the
    above expression can be calculated via the L'Hospital's rule.
\end{proof}

For example, when $n=500$, under $U(0,1)$,
the expected max-delay of the serial cost sharing mechanism equals $0.632$,
which is very close to $1-\frac{1}{e}$.

\subsection{Sum-Delay: Asymptotic Optimality}

% In this section, we show that when the number of agents is large enough, the
% optimal single deadline mechanism's expected sum-delay approaches optimality,
% regardless of the prior distribution.

\begin{theorem}\label{thm:sumdelay}
    When the number of agents approaches infinity, the optimal single deadline mechanism is optimal among all mechanisms in terms of expected sum-delay.
    % That is,
    % for any $\epsilon>0$, there exists $n_0$ so that when the number
    % of agents is above $n_0$, there exists a single deadline mechanism
    % whose expected sum-delay is within $\epsilon$ of the optimal value.
\end{theorem}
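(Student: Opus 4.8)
The plan is to identify the common asymptotic value of both the optimal single deadline mechanism and the unrestricted optimum, and show they coincide. Writing $F$ for the CDF of $f$, I expect both to converge to
\[
h_{\min}=\min_{u\in(0,1]}\frac{F(u)}{u\,(1-F(u))},
\]
which equals $f(0)$ whenever $h(u)=F(u)/(u(1-F(u)))$ is increasing at $0$ (a short expansion shows this happens exactly when $f'(0)>-2f(0)^2$), and is attained in the interior otherwise. Since the single deadline family is a subset of all admissible mechanisms, the optimum over all mechanisms is never above the optimal single deadline value, so everything reduces to two matching bounds: a lower bound of $h_{\min}(1-o(1))$ for every admissible mechanism, and an upper bound of $h_{\min}+o(1)$ for a suitably tuned single deadline mechanism.

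For the lower bound I would apply the taxation/Myerson reformulation one agent at a time. Fixing the other reports $v_{-i}$, strategy-proofness makes agent $i$'s consumed fraction $x_i(v)=1-t_i(v)$ nondecreasing in her report, with payment $p_i(v)=v\,x_i(v)-\int_0^v x_i(s)\,ds$ up to a nonpositive constant forced by individual rationality. A direct calculation gives the per-agent expressions $E[t_i\mid v_{-i}]=\int_0^1 t_i(s)f(s)\,ds$ and $E[p_i\mid v_{-i}]\le \int_0^1 t_i(s)\bigl((1-F(s))-s f(s)\bigr)\,ds$. Representing the nonincreasing delay profile through the nonnegative measure $\mu=-dt_i$ converts these into $\int_0^1 F\,d\mu$ and $\int_0^1 u(1-F(u))\,d\mu$, and the pointwise inequality $F(u)\ge h_{\min}\,u(1-F(u))$ then yields $E[t_i\mid v_{-i}]\ge h_{\min}\,E[p_i\mid v_{-i}]$. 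Summing over $i$ and using exact budget balance ($\sum_i p_i=1$ when built and $0$ otherwise) gives $E(\text{sum-delay})\ge h_{\min}\,P(\text{built})$. Because any mechanism that fails to build with constant probability incurs sum-delay $\Theta(n)$, the optimum must build with probability $1-o(1)$, so its expected sum-delay is at least $h_{\min}(1-o(1))$.

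For the matching upper bound I would reuse the concentration machinery from the proof of Theorem~\ref{thm:maxdelay}. I would choose the deadline $d=d_n$ so that the induced cost-sharing threshold $1/(K(d\vec v)\,d_n)$ concentrates around the minimizer $\theta^\star$ of $h$; the self-consistency relation $n\,d_n\,\theta^\star(1-F(\theta^\star))\to 1$ forces $d_n=\Theta(1/n)$ when $\theta^\star>0$ and lets $d_n$ be a fixed small constant when $\theta^\star=0$. Hoeffding's inequality then shows that, with probability $1-o(1/n)$, the project is built, every non-consumer receives the free part, and the number of free riders $n-K(d\vec v)$ concentrates around its mean. On this event the sum-delay equals $\bigl(n-K(d\vec v)\bigr)d_n=h(\theta^\star)+o(1)=h_{\min}+o(1)$, while the complementary event contributes at most $n\cdot o(1/n)=o(1)$ to the expectation. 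Hence the optimal single deadline mechanism attains $h_{\min}+o(1)$, matching the lower bound.

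The hard part will be making the lower bound fully rigorous. The Myerson identity must be applied across the discontinuity between the ``built'' and ``not built'' regimes and without assuming symmetry, which requires justifying $x_i(0)=0$ (the lowest types consume nothing on the build event, since otherwise monotonicity would force full free consumption for everyone and destroy revenue) and controlling the boundary term when $t_i(1)>0$; the latter is harmless because $\int_0^1\bigl((1-F(s))-s f(s)\bigr)\,ds=0$, so a uniform delay adds no revenue while only increasing delay. A secondary obstacle is certifying $P(\text{built})\to 1$ for the \emph{optimal} mechanism rather than for a fixed one, which I would resolve by sandwiching: the optimum is no worse than the tuned single deadline mechanism, whose bounded sum-delay forces the build probability to $1$.
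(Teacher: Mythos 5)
Your proposal is correct and follows essentially the same route as the paper: your $h_{\min}$ is exactly the paper's optimal delay-versus-payment ratio $r^*$, your lower bound is the paper's Myerson-based argument for Proposition~\ref{prop:lowerboundsum} (executed per agent conditional on $v_{-i}$, which is a somewhat more careful rendering than the paper's anonymity/ex-ante-budget-balance sketch), and your upper bound chooses the same deadline satisfying $n\,d_n\,\theta^\star(1-F(\theta^\star))\approx 1$ and runs the same Hoeffding concentration as Proposition~\ref{prop:achievesumdelay}. No substantive differences to report.
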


Theorem~\ref{thm:sumdelay} can be proved by combining
Proposition~\ref{prop:lowerboundsum} and
Proposition~\ref{prop:achievesumdelay}.

\begin{proposition}\label{prop:finite}
    The optimal expected sum-delay is finite regardless of the distribution.
\end{proposition}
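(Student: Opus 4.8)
The plan is to establish finiteness by exhibiting, for every $n$, a single deadline mechanism whose expected sum-delay is bounded by a constant that does not depend on $n$; since the optimal (minimum) expected sum-delay is at most that of any particular mechanism, this is enough. The decisive design choice will be to let the deadline shrink with the population as $d=\Theta(1/n)$. The reason is that I intend to charge each agent a release time of at most $d$ whenever she receives her free part; summed over all $n$ agents this charge is of order $nd$, so only $d=\Theta(1/n)$ keeps it bounded, and the sole remaining danger---that some agent receives \emph{no} free part at all---will be shown to be exponentially unlikely.

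First I would reduce the sum to one agent. Because the valuations are drawn i.i.d., symmetry gives that the expected sum-delay of $M(d)$ equals $n\,E[t_1]$, so it suffices to prove $E[t_1]=O(1/n)$. For this I would use the pointwise bound $t_1\le d+\mathbf{1}[\text{agent }1\text{ receives no free part}]$: if agent $1$ receives her free part her release time is at most $d$ (it is $0$ or $d$), and otherwise I simply bound it by $1$. Taking expectations and recalling that agent $1$ receives her free part exactly when $I(dV_{-1})=1$, I get
\[
E\Big[\textstyle\sum_i t_i\Big]\;=\;n\,E[t_1]\;\le\;nd+n\,P\big(I(dV_{-1})=0\big).
\]

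Next I would pin down the parameters so that both terms stay bounded. Since $f$ is a density, $P(V_i\ge\tau)\to 1$ as $\tau\to 0$, so I fix a constant $\tau\in(0,1)$ with $p_0:=P(V_i\ge\tau)>\tfrac12$. I then set $k=\lfloor n/2\rfloor$ and $d=\tfrac{1}{\tau k}$; for all $n\ge 2/\tau$ this $d$ lies in $[0,1]$, it satisfies $nd=n/(\tau k)\to 2/\tau$, and the affording condition $dV_i\ge\tfrac1k$ for the non-free part becomes exactly $V_i\ge\tau$. Thus the first term is bounded by a constant depending only on $f$. For the second term, note that if at least $k$ of the other $n-1$ agents clear the threshold $\tau$ then $I(dV_{-1})=1$ (the value $k$ itself witnesses success), so $P\big(I(dV_{-1})=0\big)\le P\big(B(n-1,p_0)<k\big)$, where $B(n-1,p_0)$ counts the qualifying agents. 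As $k=\lfloor n/2\rfloor<(n-1)p_0$ for large $n$, Hoeffding's inequality bounds this by $e^{-2(n-1)\delta^2}$ with $\delta=p_0-\tfrac{k}{n-1}\to p_0-\tfrac12>0$, whence $n\,P\big(I(dV_{-1})=0\big)\to 0$. Combining the two terms, this $M(d)$ has expected sum-delay at most $2/\tau+o(1)$, and for the finitely many $n<2/\tau$ the optimum is trivially at most $n$, so the optimal expected sum-delay is finite (indeed uniformly bounded in $n$).

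The main obstacle is the tension buried in the choice of $d$. Driving $d=\Theta(1/n)$ is forced on me by the $nd$ charge, but a small $d$ shrinks the effective valuations $d\vec{v}$ and so makes the per-agent price $1/k$ of the non-free part relatively expensive, threatening the very success of the cost share that delivers the free parts. The delicate point is to confirm that $d=\Theta(1/n)$ is simultaneously compatible with an affording probability bounded strictly above $\tfrac12$, so that the failure probability $P(I(dV_{-1})=0)$ is exponentially small and survives multiplication by $n$. Reconciling these two constraints through the pairing $k=\lfloor n/2\rfloor$ and $d=1/(\tau k)$ is the crux of the argument.
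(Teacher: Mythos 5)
Your proof is correct, but it takes a genuinely different route from the paper's. The paper exhibits a \emph{sequential} posted-price mechanism: it offers $\frac{1}{k}$ to the agents one by one for the whole interval $[0,1]$, builds once $k$ agents accept, and gives the project away for free to everyone approached afterwards; the only agents who suffer delay are the refusers encountered before the $k$-th acceptance (plus the exponentially unlikely event that fewer than $k$ agents ever accept), and their expected number is a constant by a negative-binomial argument, so the expected sum-delay is bounded by a constant depending only on $k$ and $f$. You instead stay inside the single deadline family and shrink the deadline as $d=\Theta(1/n)$, paying a delay of at most $d$ per agent (total $nd=O(1)$) plus an exponentially small failure term controlled by Hoeffding via the pairing $k=\lfloor n/2\rfloor$, $d=1/(\tau k)$ with $P(V_i\ge\tau)>\frac12$. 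Both arguments are valid witnesses for the upper bound since both mechanisms are strategy-proof, individually rational, and ex post budget balanced. The paper's construction is shorter and needs no concentration inequality beyond noting the failure event is negligible; yours has the merit of producing the bound from within the mechanism class the section is actually about, and it is essentially a non-optimized instance of the later argument for Proposition~\ref{prop:achievesumdelay} (with $\tau$ playing the role of the offer $o^*$). The only nitpicks are cosmetic: the threshold $n\ge 2/\tau$ for $d\le 1$ is off by a unit or so because $k=\lfloor n/2\rfloor$, and the symmetry reduction to $nE[t_1]$ is not strictly needed since the per-agent bound $t_i\le d+\mathbf{1}[I(dV_{-i})=0]$ already sums to the same thing; neither affects correctness.
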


\begin{proof}
    We consider the following mechanism: Pick an arbitrary integer $k>1$. We offer $\frac{1}{k}$ to the agents one by one. An agent gets the whole interval $[0,1]$ if she agrees
    to pay $\frac{1}{k}$ and if the project is built. Otherwise, she gets nothing. We build the project only when $k$ agents agree. Since we approach the agents one by one, after $k$ agents agree to pay $\frac{1}{k}$, all future agents receive the whole interval for free. This mechanism's expected sum-delay is bounded above by a constant. The constant
    only depends on the value of $k$ and the distribution.
\end{proof}

The following proposition follows from Proposition~\ref{prop:finite}.

\begin{proposition}\label{prop:fail}
    Given a mechanism $M$ and the number of agents $n$, let $Fail(n)$ be the probability of not building under $M$. We only need to consider $M$ that satisfies $Fail(n)= O(1/n)$.
\end{proposition}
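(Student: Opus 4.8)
The plan is to convert a large failure probability into a large expected sum-delay via the budget-balance constraint, and then use Proposition~\ref{prop:finite} to exclude such mechanisms. First I would observe that whenever the project is not built, ex post budget balance forces $t_i = 1$ and $p_i = 0$ for every agent $i$. Hence, conditioned on the failure event, the sum-delay is exactly $n$, since all $n$ agents incur the maximum delay of $1$. This is the crucial structural observation: a single unit of failure probability is ``expensive,'' contributing $n$ to the expected sum-delay.

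Second, I would lower-bound the expected sum-delay of an arbitrary mechanism $M$ by discarding the contribution of the build event (delays are nonnegative) and keeping only the failure event. This gives the clean inequality that the expected sum-delay of $M$ is at least $n \cdot Fail(n)$.

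Third, I would invoke Proposition~\ref{prop:finite}, which exhibits a mechanism whose expected sum-delay is bounded above by a constant $C$ that does not depend on $n$. Therefore the optimal expected sum-delay is at most $C$. Any mechanism $M$ that could be optimal (or even competitive with the optimum) must have expected sum-delay at most this constant, so combining with the lower bound yields $n \cdot Fail(n) \le C$, i.e. $Fail(n) \le C/n = O(1/n)$. Mechanisms whose failure probability is asymptotically larger than $1/n$ have expected sum-delay growing without bound and are strictly dominated by the mechanism of Proposition~\ref{prop:finite}, so they may be discarded; this justifies restricting attention to $Fail(n) = O(1/n)$.

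I do not anticipate a substantive obstacle here, as the argument is essentially a one-line accounting once the structural observation is in place. The only point requiring care is the precise reading of the phrase ``we only need to consider,'' which I interpret as the statement that no asymptotically optimal mechanism can violate $Fail(n) = O(1/n)$; the entire proof then rests on the fact that the constant budget for expected sum-delay afforded by Proposition~\ref{prop:finite} simply cannot pay for a failure probability that decays more slowly than $1/n$.
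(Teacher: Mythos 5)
Your argument is correct and is evidently the intended one: the paper gives no explicit proof, stating only that the proposition ``follows from Proposition~\ref{prop:finite},'' and your derivation---ex post budget balance forces every $t_i=1$ on failure, so the expected sum-delay is at least $n\cdot Fail(n)$, which must stay below the constant bound of Proposition~\ref{prop:finite}---is exactly the missing one-line accounting. No gaps; your reading of ``we only need to consider'' as restricting to mechanisms competitive with the optimum is also the right one.
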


% \begin{definition}[Single deadline with single offer (SDSO) mechanisms]
%     A single deadline with single offer (SDSO) mechanism, denoted by $M(d,o)$, is characterized
%     by two parameters: the deadline $d$ and an offer $o$ for the whole interval $[0,1]$.

%     Under $M(d,o)$, an agent receives the non-free part $[0,d]$ if she is willing
%     to pay $od$. An agent always receives the free part $[d,1]$.
% \end{definition}

% It should be noted that the SDSO mechanisms are not always feasible. A SDSO mechanism is strategy-proof, individually rational, but may not be ex post budget balanced. We only use SDSO in the proofs.

We then propose a relaxed version of the ex post budget balance constraint, and
use it to calculate the delay lower bound.

\begin{definition}[Ex ante budget balance]
    Mechanism $M$ is ex ante budget balanced if and only if the expected total payment from the agents equals the probability of building, which is
    equal to the expected project cost (as the project cost is exactly $1$).
\end{definition}

\begin{proposition}\label{prop:lowerboundsum}
    Let $Fail(n)$ be the probability of not building the project when there are $n$ agents.
    We consider what happens when we offer $o$ for the whole interval $[0,1]$ to an individual
    agent. If the agent accepts $o$ then she pays $o$ and gets the whole interval. Otherwise,
    the agent pays $0$ and receives nothing.

    We define the {\em delay versus payment ratio} $r(o)$ as follows:

    \[r(o) = \frac{\int_{0}^{o}f(x)dx}{o\int_o^1f(x)dx}\]

    $r$ is continuous on $(0,1)$. Due to $f$ being Lipschitz continuous, we have $\lim_{o\rightarrow 0}r(o)=f(0)$
    and $\lim_{o\rightarrow 1}r(o)=\infty$.\footnote{When $o$ approaches $0$,
    $r(o)$'s numerator is approaching $of(0)$ while the denominator is approaching
    $o$.}
    We could simply set $r(0)=f(0)$, then $r$ is continuous on $[0,1)$.
    We define the {\em optimal delay versus payment ratio}
    $r^* = \min_{o\in [0,1)}r(o)$.

    % \[o^* = \arg\min_{o\in [0,1]} r(o)\]

    The expected sum-delay is bounded below by $r^*(1-Fail(n))$, which approaches
    $r^*$ asymptotically according to Proposition~\ref{prop:fail}.

    % The above lower bound is calculated by relaxing the {\em ex post budget
    % balance} constraint by the {\em ex ante budget balance constraint}.  Under the
    % ex ante budget balance constraint, the optimal mechanism is

    % \[M(\frac{1-Fail(n)}{no^*\int_{o^*}^1f(x)dx}, o^*)\]

    % If we swap the ex post budget balance constraint by the ex ante budget balance constraint, then for any $\epsilon$ and any distribution, there exists $n_0$ so that when $n>n_0$, there exists a SDSO mechanism $M(d^*, o^*)$ that is within $\epsilon$ from optimality in terms
    % of expected sum-delay. We also have that $d^*<1$.
\end{proposition}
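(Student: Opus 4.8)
The plan is to prove the bound one agent at a time and then sum. Concretely, I would establish the per-agent inequality that, for every agent $i$ and every fixed profile $v_{-i}$ of the others' reports, the expected delay $E_{v_i}[t_i]$ is at least $r^*$ times the expected payment $E_{v_i}[p_i]$. Granting this, summing over the $n$ agents and taking the outer expectation over all reports gives $E[\sum_i t_i] \ge r^*\, E[\sum_i p_i]$. Since the mechanism is ex post (hence ex ante) budget balanced, the expected total payment equals the probability of building, namely $1-Fail(n)$ (when the project is built the payments sum to $1$, and to $0$ otherwise). This yields the claimed bound $r^*(1-Fail(n))$, and the asymptotic statement follows at once from $Fail(n)=O(1/n)$ (Proposition~\ref{prop:fail}).

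To obtain the per-agent inequality I would invoke the taxation principle. Fixing $v_{-i}$, strategy-proofness forces agent $i$ to face a fixed menu of (release time, payment) pairs, from which she selects the utility-maximizing option. Writing $q=1-t$ for the length of the consumed interval $[t,1]$, the agent's value takes the single-parameter linear form $v_i q - p$, so the induced allocation $q(v_i)$ is monotone non-decreasing and the payment obeys the envelope identity $p(v)=v\,q(v)-\int_0^v q(s)\,ds$. Individual rationality together with the constraint $p\ge 0$ pins the integration constant at $0$: the $v=0$ type has utility $-p$, which is $\ge 0$ by IR and $\le 0$ by $p\ge 0$, hence she pays nothing. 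This completes the single-agent characterization.

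The heart of the argument is a layer-cake decomposition of the monotone allocation. For each level $\ell\in[0,1]$ set $\theta(\ell)=\inf\{v: q(v)\ge \ell\}$, so that $q(v)=\int_0^1 \mathbf{1}[v\ge \theta(\ell)]\,d\ell$. Writing $F(o)=\int_0^o f(x)\,dx$, a Tonelli interchange gives $E[t]=\int_0^1 F(\theta(\ell))\,d\ell$. Substituting the decomposition into the envelope identity collapses it slice by slice, since $p(v)=\int_0^1 \theta(\ell)\,\mathbf{1}[v\ge\theta(\ell)]\,d\ell$ (each slice is an all-or-nothing offer at critical value $\theta(\ell)$), whence $E[p]=\int_0^1 \theta(\ell)\bigl(1-F(\theta(\ell))\bigr)\,d\ell$. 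The definition $r^*=\min_{o\in[0,1)} r(o)$ is precisely the statement that $F(o)\ge r^*\,o\,(1-F(o))$ for every $o\in[0,1)$, and this inequality holds trivially at $o=1$ as well (both sides become $1\ge 0$), so it applies to every value $\theta(\ell)$. Hence the integrand of $E[t]$ dominates $r^*$ times the integrand of $E[p]$ pointwise in $\ell$, and integrating gives the per-agent inequality $E[t]\ge r^*\,E[p]$.

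The main obstacle will be making the single-agent reduction and the payment bookkeeping fully rigorous: I must justify that the envelope formula holds with the correct (zero) constant under our IR and nonnegative-payment constraints, and that rewriting the expected payment as $\int_0^1 \theta(\ell)\bigl(1-F(\theta(\ell))\bigr)\,d\ell$ is legitimate for an arbitrary monotone $q$ — this is Myerson's payment identity re-expressed through the threshold decomposition, with Tonelli justifying every interchange because all integrands are nonnegative. Once these are in place, the remaining steps (the pointwise inequality from the definition of $r^*$ and the budget-balance substitution) are routine.
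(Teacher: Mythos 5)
Your proof is correct, and at its core it is the same argument the paper gives: reduce to a single-parameter problem per agent via Myerson, view the monotone allocation as a continuum of all-or-nothing offers at various critical prices, and observe that each such offer incurs delay at least $r^*$ times the expected payment it collects. The differences are in execution, and they mostly favour your version. The paper first relaxes to ex ante budget balance, asserts ``without loss of generality anonymous,'' and analyzes one representative agent paying $\frac{1-Fail(n)}{n}$ in expectation; you skip the anonymity reduction entirely by proving the conditional per-agent inequality $E_{v_i}[t_i \mid v_{-i}] \ge r^*\,E_{v_i}[p_i \mid v_{-i}]$ for each fixed $v_{-i}$ and then summing, which is cleaner and uses ex post budget balance directly. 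You also make the paper's informal ``for each infinitesimal time interval there is a price'' precise via the threshold/layer-cake decomposition $q(v)=\int_0^1 \mathbf{1}[v\ge\theta(\ell)]\,d\ell$, with explicit formulas $E[t]=\int_0^1 F(\theta(\ell))\,d\ell$ and $E[p]=\int_0^1 \theta(\ell)(1-F(\theta(\ell)))\,d\ell$, and you correctly pin the envelope constant at zero using IR together with $p_i\ge 0$. One small point worth a sentence in a final writeup: the identity $\int_0^1 \mathbf{1}[v\ge\theta(\ell)]\,d\ell = q(v)$ can fail at jump points of $q$, but these form a countable set of $v$-values and hence contribute nothing to the expectations since $f$ is a density; similarly, if randomized mechanisms are allowed, the same argument applies to the expected allocation length and expected payment. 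Neither issue affects the conclusion.
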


\begin{proof}%[Outline]
    % We first show that $r$ is continuous.
    % \[r(o) = \frac{\int_{0}^{o}f(x)dx}{(o\int_o^1f(x)dx)}\]
    % \[r'(o) = \frac{f(o)(o\int_o^1 f(x)dx) -( \int_0^o f(x)dx)(\int_o^1 f(x)dx -of(o))}{(o\int_o^1f(x)dx)^2}\]
    % \[r'(o) = \frac{f(o)o -( \int_0^o f(x)dx)(\int_o^1 f(x)dx)}{(o\int_o^1f(x)dx)^2}\]
    % \[r'(o) = \frac{f(o)}{o(\int_o^1f(x)dx)^2}- \frac{\int_0^o f(x)dx}{o^2(\int_o^1f(x)dx)}\]
    If we switch to ex ante budget balance, then it is without loss of generality to
    focus on anonymous mechanisms. We then face a single agent mechanism design problem
    where an agent pays $\frac{1-Fail(n)}{n}$ in expectation and we want to minimize
    her expected delay. Based on Myerson's characterization for single-parameter settings, here every strategy-proof mechanism works as follows: for each infinitesimal time interval
    there is a price and the price increases as an agent's allocated interval increases in length. There is an optimal price that minimizes the ratio between the delay caused by the price and the expected payment (price times the probability that the price is accepted).
    The total payment is $1-Fail(n)$, which means the total delay is at least $r^*(1-Fail(n))$.
    % An optimal mechanism simply uses this price and sets
    % the non-free part's length so that an agent's expected payment equals $\frac{1-Fail(n)}{n}$.
\end{proof}

\begin{proposition}\label{prop:achievesumdelay}
    Let $o^*$ be the optimal offer that leads to the optimal delay versus payment ratio
    $r^*$.\footnote{If $o^*=0$, then we replace it with an infinitesimally small $\gamma>0$. The achieved
    sum-delay is then approaching $r(\gamma)(1+\epsilon)$ asymptotically. When $\gamma$ approaches
    $0$, $r(\gamma)$ approaches $r^*$.}
    \[o^* = \arg\min_{o\in [0,1)} r(o)\]

    Let $\epsilon >0$ be an arbitrarily small constant.
    The following single deadline mechanism's expected sum delay approaches $r^*(1+\epsilon)$ asymptotically.

    \[M(\frac{1+\epsilon}{no^*\int_{o^*}^1f(x)dx})\]

\end{proposition}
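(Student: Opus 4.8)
The plan is to establish the operative inequality $\limsup_{n\to\infty} E(\sum_i t_i) \le r^*(1+\epsilon)$ for the displayed mechanism; combined with the lower bound $r^*(1-Fail(n))\to r^*$ from Proposition~\ref{prop:lowerboundsum} and the fact that $\epsilon>0$ is arbitrary, this pins the optimal asymptotic sum-delay at $r^*$ and yields Theorem~\ref{thm:sumdelay}. Write $q=\int_{o^*}^1 f(x)dx$ and $d=\frac{1+\epsilon}{no^*q}$, and assume $o^*>0$ (otherwise invoke the footnote and run the argument with $\gamma$ in place of $o^*$). Note $r^*=r(o^*)=\frac{1-q}{o^*q}$, using $\int_0^{o^*}f=1-q$. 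The engine of the proof is one carefully chosen cost-share level. I would set $k^\dagger=\lceil nq\rceil$; its affordability threshold on the non-free part is $\frac{1}{k^\dagger d}\le\frac{1}{nqd}=\frac{o^*}{1+\epsilon}$, which sits \emph{strictly below} $o^*$.

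First I would show that with overwhelming probability the realized level $K(d\vec{v})$ is at least $k^\dagger$ and that every agent receives the free part. Let $S=|\{i : v_i\ge \frac{o^*}{1+\epsilon}\}|$, a Binomial$(n,q+\eta)$ variable, where $\eta=\int_{o^*/(1+\epsilon)}^{o^*}f(x)dx>0$ is strictly positive because $f$ is positive and $o^*>0$. Consider the event $E=\{S\ge k^\dagger+1\}$. Since $E(S)=n(q+\eta)$ exceeds $k^\dagger=\lceil nq\rceil$ by $\approx n\eta$, Hoeffding's inequality gives $P(E^c)=P(S\le k^\dagger)\le e^{-cn}$ for some constant $c>0$ depending on $\eta$. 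On $E$ two facts hold simultaneously: (i) each of the $S$ agents with $v_i\ge\frac{o^*}{1+\epsilon}$ can afford the level-$k^\dagger$ price, so the level-$k^\dagger$ cost share on the non-free part succeeds and hence $K(d\vec{v})\ge k^\dagger$; and (ii) for every agent $i$, deleting $i$ leaves $S-1\ge k^\dagger$ agents above the threshold, so $I(dv_{-i})=1$ and agent $i$ receives the free part. The single event $E$ therefore delivers both properties with no union bound, and the project is built on $E$, so $Fail(n)\le e^{-cn}$, consistent with Proposition~\ref{prop:fail}.

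On $E$ the delay structure is transparent: the $K$ cost-share members receive both parts and have delay $0$, while each of the remaining $n-K$ agents receives only the free part and has delay $d$. Hence the sum-delay on $E$ equals $(n-K)d$, and since this is decreasing in $K$ while $K\ge k^\dagger\ge nq$,
\[ (n-K)d \;\le\; (n-k^\dagger)d \;\le\; (n-nq)\,d \;=\; n(1-q)\cdot\frac{1+\epsilon}{no^*q} \;=\; \frac{(1-q)(1+\epsilon)}{o^*q} \;=\; r^*(1+\epsilon). \]
Off $E$ the sum-delay is at most $n$ because every release time is at most $1$, and $n\,P(E^c)\le n\,e^{-cn}\to 0$. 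Combining, $E(\sum_i t_i)\le r^*(1+\epsilon)+n\,e^{-cn}$, so $\limsup_n E(\sum_i t_i)\le r^*(1+\epsilon)$, which is exactly the content needed for Theorem~\ref{thm:sumdelay}.

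The step I expect to be the main obstacle is the coupling between the \emph{discrete} serial-cost-sharing level $K(d\vec{v})$ and the continuous offer picture, and the choice of $k^\dagger$ is precisely what tames it: because the threshold $\frac{o^*}{1+\epsilon}$ lies strictly below $o^*$, positivity of $f$ forces a \emph{constant-size} gap $\eta>0$, which is what makes the Hoeffding bound bite and lets me lower-bound $K$ by $k^\dagger$ rather than chasing the exact realized level. I would also be careful that the high-value cost-share members genuinely receive the free part; if any did not, its delay would jump from $0$ to $1-d$ and, with up to $\Theta(nq)$ such members, the bound would blow up. This is why the statement $I(dv_{-i})=1$ for \emph{all} $i$ is built into $E$ via the margin $S\ge k^\dagger+1$, rather than treated as an afterthought. (One can check that the true limit is actually a constant strictly below $r^*(1+\epsilon)$, so the proposition's ``approaches'' is really the upper bound established here, which is all the asymptotic-optimality argument requires.)
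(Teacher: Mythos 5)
Your proposal is correct and follows essentially the same route as the paper's proof: the same deadline $d=\frac{1+\epsilon}{no^*q}$, the same threshold $o^*/(1+\epsilon)$ giving a constant probability margin $\eta$, a Hoeffding bound showing all agents receive the free part except on an exponentially rare event, and the same final count $(n-k)d\le r^*(1+\epsilon)$. Your version is in fact slightly more careful than the paper's on two minor points (taking $k^\dagger=\lceil nq\rceil$ to keep the cost-share level an integer, and packaging both ``cost sharing succeeds'' and ``every $i$ has $I(dv_{-i})=1$'' into the single event $S\ge k^\dagger+1$), but these are refinements of the same argument, not a different one.
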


\begin{proof}
    % Let $B_i$ be the following Bernoulli random variable:
    % \[B_i=\begin{cases}
    %     1 & (1+\epsilon)V_i \ge o^*\\
    %     0 & \text{otherwise}
   % \end{cases}
    % \]
Let $p=P(V_i(1+\epsilon)\ge o^*)$.
Let $k=n\int_{o^*}^1f(x)dx$.
$p$ is the probability that an agent is willing to pay $\frac{1}{k}$
for the non-free part whose length is $\frac{1+\epsilon}{no^*\int_{o^*}^1f(x)dx}$.
We use $B$ to denote the Binomial distribution $B(n,p)$.
If $B> k$, then every agent receives the free part, because
cost sharing is successful even if we remove one agent.
The probability that an agent does not receive the free part is then bounded above
by $P(B\le k)$.
According to Hoeffding's inequality, we have that when $k<np$, we have

\[P(B\le k)\le e^{-2n(p-\frac{k}{n})^2} =
e^{-2n(\int_{\frac{o^*}{1+\epsilon}}^1f(x)dx-\int_{o^*}^1f(x)dx)^2}
=e^{-2n(\int_{\frac{o^*}{1+\epsilon}}^{o^*}f(x)dx)^2}\]

Let $\beta=\int_{\frac{o^*}{1+\epsilon}}^{o^*}f(x)dx$.
The expected total delay when some agents do not receive the free part is then at most $ne^{-2n\beta^2}$, which approaches $0$ as $n$ goes to infinity.
Therefore, we only need to consider situations where all agents receive the free part and at least $k$ agents receive the non-free part.
The expected sum delay on the remaining $n-k$ agents is then at most

    \[(n-k)\frac{1+\epsilon}{no^*\int_{o^*}^1f(x)dx}= \frac{(\int_0^{o^*}f(x)dx)(1+\epsilon)}{o^*\int_{o^*}^1f(x)dx}= (1+\epsilon)r^*\]

This concludes the proof.
\end{proof}

% \begin{proof}[Theorem~\ref{thm:sumdelay}]
%     The proof is based on the above propositions.
% When $n$ goes to infinity, the probability of building under the optimal mechanism goes to $1$.
%     If we relax the ex post budget balance constraint to the ex ante budget balance
%     constraint, then every agent's expected payment is $\frac{1}{n}$.\footnote{After
%     switching to ex ante budget balance, it is without loss of generality to focus
%     only on anonymous mechanisms, because with ex ante budget balance, we can take
%     the ``average'' of two mechanisms.}
%     The optimal mechanism is then to offer $o^*$ to every agent for the interval $[0,d^*]$ and give out the interval $[d^*,1]$ for free.
%     We could slightly reduce $d^*$ to $d^*-\epsilon$ so that the agents' expected payment becomes strictly above $\frac{1}{n}$.
%     When $n$ is large enough and when $\epsilon$ is small enough, $M(d^*-\epsilon)$ performs nearly the same as $M(d^*, o^*)$.
% \end{proof}

We then use an example to show that when $n=500$, under different distributions,
the optimal single deadline mechanism's expected sum-delay is close to optimality.

\begin{example}
    We consider $n=500$ which is the same as Example~\ref{ex:max}.
    Simulations are based on $100,000$ random draws.

    \begin{itemize}

        \item $f$ is the uniform distribution $U(0,1)$:
            The single deadline mechanism $M(1)$ (essentially the serial cost sharing mechanism)
            has an expected sum-delay of $1.006$, which is calculated via numerical simulation.
            $Fail(500)$ is then at most $0.002$.
            $r^*=1$. The lower bound is $0.998$, which
            is close to our achieved sum-delay $1.006$.

        \item $f$ is the normal distribution $N(0.5,0.1)$ restricted to $[0,1]$:
            The single deadline mechanism $M(1)$'s expected sum-delay equals $2.3\mathrm{e}-4$ in simulation, which is obviously close to optimality.

        \item $f$ is the beta distribution $Beta(0.5,0.5)$:
            The single deadline mechanism $M(0.01)$'s expected sum-delay equals $1.935$ in simulation.
            $Fail(500)$ is then at most $0.00387$.
            $r^*=1.927$. The lower bound equals $(1-0.00387)*r^*=1.920$, which
            is very close to the achieved sum-delay of $1.935$.
            The serial cost sharing mechanism $M(1)$ is far away from
            optimality in this example.
            The expected sum-delay of the serial cost sharing mechanism
            is much larger at $14.48$.
    \end{itemize}

\end{example}

% \section{Sequential Auction}
% The mechanisms are based on multi-round decisions and the number of rounds depends on the type profile.

% This mechanism has many sequential decisions rounds. In each sequential round, it has two vectors:$\{B,T\}$. Vector $B$ is the set of $b_i(i = 1...n)$, represents agent $i$'s bid in this round($\sum {b_i} = 1$  and $b_i \in [0,1]$). Vector $T$ is the set of $t_i(i = 1...n$  and $t_i \in [0,1])$, represents agent $i$'s part how much she or he can get. If $t_i = 1$, means agent $i$ can pay for $b_i$ and get whole part in this round if other agents all agree with their bids.

% The agents will be asked for a sequential bid (sequential $B$) in order. If all of agents agree their bids ($v_i \geq b_i (i = 1...n)$). then he or she will get his or her part ($t_i$). If any of them not agree with his/her bid, they will move into the next round.

% In the next round, they will have another gene $\{B,T\}$. If at the end of the sequential, they do not agree with any of the option levels. They will decide not to build the project.\\

% To make sure $\sum {b_i = 1}$, we use softmax to make sure it can pay 1 total cost (normalize the total cost) in every round or level.
% $$B = softmax (B')$$

% This sequential mechanism will include the largest unanimous mechanism, cost sharing mechanism.

\subsection{Automated Mechanism Design for Smaller Numbers of Agents}

For smaller numbers of agents,
the single deadline mechanism family no longer contains a near optimal mechanism.
We instead propose two numerical methods for identifying better mechanisms.
One is by extending the single deadline mechanism family
and the other is via evolutionary computation.

% \subsection{Multiple Deadline Mechanisms}

\begin{definition}[Multiple Deadline Mechanisms]
    A multiple deadline mechanism\\
    $M(d_1,\ldots,d_n)$ is characterized
    by $n$ different deadlines. Agent $i$'s non-free part is $[0,d_i]$ and
    her free part is $[d_i,1]$. The mechanism's rules are otherwise identical
    to the single deadline mechanisms.
    % : the highest $k$
    % agents who are willing to pay $\frac{1}{k}$ for their non-free parts
    % get their non-free parts. $k$ is set to be as large as possible.
    % An agent gets her free part if and only if the other agents (excluding her) can
    % still successfully cost share.
\end{definition}

We simply use exhaustive search to find the best set of deadlines. Obviously,
this approach only works when the number of agents is tiny.
We then present an Automated Mechanism Design approach based on evolutionary computation.

Ohseto~\cite{Ohseto2000:Characterizations} characterized all strategy-proof and
individually rational mechanisms for the binary public project model (under
several minor technical assumptions). We summarize the author's characterization as follows:

\begin{itemize}

    \item {\em Unanimous mechanisms} (characterization for the nonexcludable model): Under an unanimous mechanism, there is a cost share vector $(c_1,c_2,\ldots,c_n)$ with $c_i\ge 0$ and
        $\sum_{i}c_i=1$. The project is built if and only if all agents accept this cost share vector.

    \item {\em Largest unanimous mechanisms} (characterization for the
        excludable model): Under a largest unanimous
        mechanism, for every subset/coalition of the agents, there is a constant cost
        share vector.
        The agents initially face the cost share vector corresponding to the grand coalition. If some
        agents do not accept the current cost share vector, then they are forever excluded. The remaining
        agents face a different cost share vector based on who are left.
        If at some point, all remaining agents accept,
        then we build the project. Otherwise, the project is not built.
\end{itemize}

We extend the largest unanimous mechanisms by adding the {\em release time} element.

\begin{definition}[Sequential unanimous mechanisms]
    A cost share vector under a sequential unanimous mechanism includes
    both the payments and the release time:

    \[T_1,B_1,\quad T_2,B_2,\quad \ldots,\quad T_n,B_n\]

    Agent $i$ accepts the above cost share vector if and only if her utility
    {\em based on her reported valuation} is nonnegative when paying $B_i$
    for the time interval $[T_i,1]$.  That is, agent $i$ accepts the above cost share
    vector if and only if {\em her reported valuation} is at least
    $\frac{B_i}{1-T_i}$.  $\frac{B_i}{1-T_i}$ is called the {\em
    unit price} agent $i$ faces. We require $B_i\ge 0$ and $\sum_{i}B_i=1$.

    A sequential unanimous mechanism contains $m$ cost share vectors in a
    sequence.  The mechanism goes through the sequence and stops at the first
    vector that is accepted by all agents. The project is built and the agents'
    release time and payments are determined by the unanimously accepted cost
    share vector. If all cost share vectors in the sequence are rejected, then
    the decision is not to build.
\end{definition}

The largest unanimous mechanisms (can be interpreted as special cases with
binary $T_i$) form a subset of the sequential unanimous mechanisms.  The
sequential unanimous mechanisms' structure makes it suitable for genetic
algorithms --- we treat the cost share vectors as the {\em genes} and treat the
sequences of cost share vectors as the {\em gene sequences}.

The sequential unanimous mechanisms are generally not strategy-proof. However, they
can be easily proved to be strategy-proof in two scenarios:

\begin{itemize}

    \item A sequential unanimous mechanism is strategy-proof when {\em the sequence
        contains only one cost share vector} (an agent faces a take-it-or-leave-it
        offer).
        This observation makes it easy to generate
        an initial population of strategy-proof mechanisms.

    \item If for every agent, as we go through the cost share vector sequence,
        the unit price an agent faces is {\em nondecreasing} and her release time is also
        {\em nondecreasing}, then the mechanism is strategy-proof.
        Essentially, when the above is satisfied,
        all agents prefer earlier cost share vectors.
        All agents are incentivized to report truthfully, as doing so enables them to secure the earliest possible cost share vector.
\end{itemize}

The sequential unanimous mechanism family
{\em seems} to be quite expressive.\footnote{Let $M$ be a strategy-proof
mechanism. There exists a sequential unanimous mechanism $M'$ (with exponential
sequence length). $M'$ has an
approximate equilibrium where the equilibrium outcome is
arbitrarily close to $M$'s outcome. To prove this, we only need to discretize an
individual agent's
type space $[0,1]$ into a finite number of grid points. The number of type profiles
is exponential. We place $M$'s outcomes for all these type profiles
in a sequence.} Our experiments show that by optimizing
within the sequential unanimous mechanisms, we are able to identify mechanisms
that perform better than existing mechanisms.
Our approach is as follows:

\begin{itemize}

    \item Initial population contains $200$ strategy-proof mechanisms. Every initial mechanism is
        a sequential unanimous mechanism with only one cost share vector.
        The $B_i$ and the $T_i$ are randomly generated by sampling $U(0,1)$.

    \item We perform evolution for $200$ rounds. Before each round, we filter
        out mechanisms that are not truthful. We have two different filters:

        \begin{itemize}

            \item Strict filter: we enforce that every agent's unit price faced and
                release time must be nondecreasing. With this filter, the final
                mechanism produced must be strategy-proof. We call this variant
                the {\em Truthful Genetic Algorithm (TGA)}.

            \item Loose filter: we use simulation to check for
                strategy-proofness violations.  In every evolution round, we
                generate $200$ random type profiles.  For each type profile and
                each agent, we randomly draw one false report and we filter out
                a mechanism if any beneficial manipulation occurs.
                After finishing evolution, we use $10,000$ type profiles to
                filter out the untruthful mechanisms from the final population.
                It should be noted that, we can only claim that the
                remaining mechanisms are {\em probably} truthful.
                We call this variant
                the {\em Approximately Truthful Genetic Algorithm (ATGA)}.

        \end{itemize}

    \item We perform crossover and mutations as follows:

        \begin{itemize}

            \item Crossover: We call the top $50\%$ of the
                population (in terms of fitness, {\em i.e.}, expected max-delay or sum-delay)
                the {\em elite population}. For every elite
                mechanism, we randomly pick another mechanism from the whole
                population, and perform a crossover by randomly swapping one gene segment.

% \vspace{-.1in}
\begin{figure}[h!]
\includegraphics[width=\textwidth]{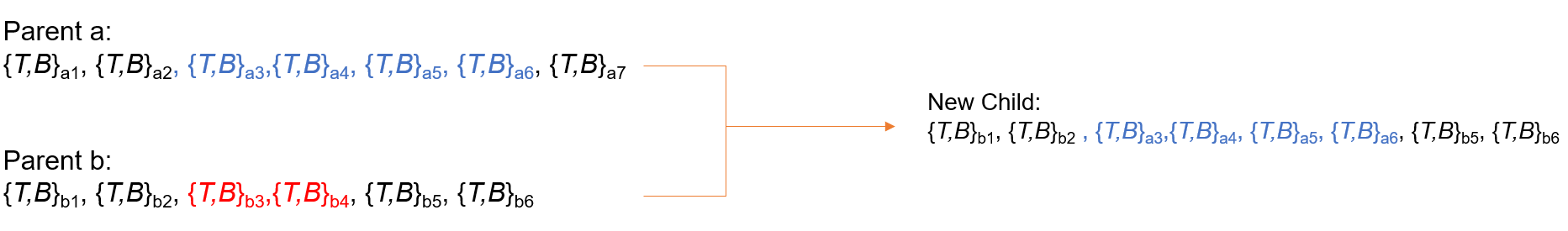}
\caption{Crossover}\label{fig1}
\end{figure}
% \vspace{-.1in}

            \item Mutation: For every elite mechanism, with $20\%$ chance, we randomly select
                one gene, modify the offer of one agent by making it worse.
                We insert that new cost share vector
                into a random position after the original position.

% \vspace{-.1in}
\begin{figure}[h!]
\includegraphics[width=\textwidth]{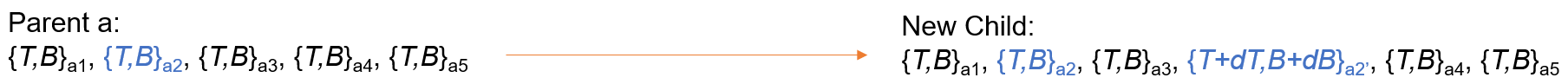}
\caption{Mutation}\label{fig2}
\end{figure}
% \vspace{-.1in}

            \item Neighbourhood Search: For every elite mechanism, with $20\%$ chance,
                we randomly perturb one gene uniformly (from $-10\%$ to $+10\%$).

\begin{figure}[h!]
\includegraphics[width=\textwidth]{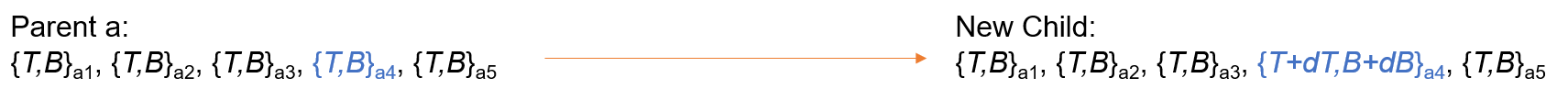}
\caption{Neighborhood Search}\label{fig3}
\end{figure}

        \item Abandon duplication and unused genes: In every evolution round,
    if a cost share vector is never unanimously accepted or if two cost share vectors
    are within $0.0001$ in L1 distance.
    then we remove the duplication/unused genes.

        \end{itemize}

\end{itemize}

% why 45% 45% 10%
% equal spilt other than 10%$

% neibhoughood search what is dt db
% shang xia 0.1/0.01 uniform pertub

% what is the percentage of three different crossover/mutations?
% uniform elite, uniform general pop, crossover
% every elite 0.2 mutate (not for every gene, random gene instead)
% every elite every gene 0.2 pertub

% abandon: two individuals same? or approximately same
% 10-4 difference

% how to initiaze? uniform distribution?
% 0.01 0.01 pool random draw

% elitesize?
% 0.5*popsize

% lie once in training?
% every pop, every agent, lie once

% population size 300 500?? different pop size for hte table?
% population 200, but keep best individual every iteration
% gmax 200

% 1000000 taoli? what does 1000000mean?
% same as training

% \subsection{Experiments}
We present the expected max-delay and sum-delay for $n=3,5$ and for different
distributions.  ATGA is only approximately truthful. We recall that in our
evolutionary process, in each round, we only use a very loose filter to filter
out the untruthful mechanisms. After evolution finishes, we run a more rigorous
filter on the final population (based on $10,000$ randomly generated type
profiles). The percentage in the parenthesis is the percentage of mechanisms
surviving the more rigorous test.  The other mechanisms (TGA and Single/Multiple
deadlines) are strategy-proof. SCS is the serial cost sharing mechanism from
Moulin~\cite{Moulin1994:Serial}.

%, which has the best known expected delays~\cite{Guo2018:Cost}.

% We do the experiments on the windows with i5-8300H$n=3,5$. We use 10000 test samples to test the experiments.
% We test the $GA_1$(approximation truthful),$GA_2$(truthful) , different deadline and typical existing method Serial Cost Sharing(SCS) methods on the different distributions.

% Please add the following required packages to your document preamble:
% \usepackage[table,xcdraw]{xcolor}
% If you use beamer only pass "xcolor=table" option, i.e. \documentclass[xcolor=table]{beamer}
\begin{table}[]
\centering{}
\small
\begin{tabular}{cccccc}
\multicolumn{1}{c|}{\textit{\textbf{n=3,sum-delay}}} & {\color[HTML]{303030} ATGA}                  & {\color[HTML]{303030} TGA}            & {\color[HTML]{303030} Single deadline} & {\color[HTML]{303030} Multiple deadline} & {\color[HTML]{303030} SCS}            \\ \hline
\multicolumn{1}{c|}{\textbf{Uniform(0,1)}}           & {\color[HTML]{303030} \textbf{1.605(95\%)}}  & {\color[HTML]{303030} \textbf{1.605}} & {\color[HTML]{303030} \textbf{1.605}}  & {\color[HTML]{303030} \textbf{1.605}}    & {\color[HTML]{303030} \textbf{1.605}} \\
\multicolumn{1}{c|}{\textbf{Beta(0.5,0.5)}}          & {\color[HTML]{303030} \textbf{1.756(89\%)}}  & {\color[HTML]{303030} \textbf{1.757}} & {\color[HTML]{303030} \textbf{1.757}}  & {\color[HTML]{303030} \textbf{1.757}}    & {\color[HTML]{303030} \textbf{1.757}} \\
\multicolumn{1}{c|}{\textbf{Bernoulli(0.5)}}         & {\color[HTML]{303030} \textbf{0.869(100\%)}} & {\color[HTML]{303030} \textbf{0.868}} & {\color[HTML]{303030} 1.499}           & {\color[HTML]{303030} 1.253}             & {\color[HTML]{303030} 1.498}          \\
\multicolumn{1}{c|}{\textbf{50\% 0, 50\% 0.8}}       & {\color[HTML]{303030} \textbf{1.699(98\%)}}  & {\color[HTML]{303030} 1.873}          & {\color[HTML]{303030} 1.873}           & {\color[HTML]{303030} 1.873}             & {\color[HTML]{303030} 1.873}          \\
\textbf{}                                            & {\color[HTML]{303030} \textbf{}}             & {\color[HTML]{303030} }               & {\color[HTML]{303030} }                & {\color[HTML]{303030} }                  & {\color[HTML]{303030} }               \\
\multicolumn{1}{c|}{\textit{\textbf{n=3,max-delay}}} & {\color[HTML]{303030} ATGA}                  & {\color[HTML]{303030} TGA}            & {\color[HTML]{303030} Single deadline} & {\color[HTML]{303030} Multiple deadline} & {\color[HTML]{303030} SCS}            \\ \hline
\multicolumn{1}{c|}{\textbf{Uniform(0,1)}}           & {\color[HTML]{303030} \textbf{0.705(97\%)}}  & {\color[HTML]{303030} \textbf{0.705}} & {\color[HTML]{303030} \textbf{0.705}}  & {\color[HTML]{303030} \textbf{0.705}}    & {\color[HTML]{303030} \textbf{0.705}} \\
\multicolumn{1}{c|}{\textbf{Beta(0.5,0.5)}}          & {\color[HTML]{303030} \textbf{0.754(87\%)}}  & {\color[HTML]{303030} 0.757}          & {\color[HTML]{303030} 0.782}           & {\color[HTML]{303030} 0.757}             & {\color[HTML]{303030} 0.782}          \\
\multicolumn{1}{c|}{\textbf{Bernoulli(0.5)}}         & {\color[HTML]{303030} \textbf{0.5(100\%)}}   & {\color[HTML]{303030} \textbf{0.498}} & {\color[HTML]{303030} 0.687}           & {\color[HTML]{303030} \textbf{0.50}}     & {\color[HTML]{303030} 0.877}          \\
\multicolumn{1}{c|}{\textbf{50\% 0, 50\% 0.8}}       & {\color[HTML]{303030} \textbf{0.676(94\%)}}  & {\color[HTML]{303030} 0.753}          & {\color[HTML]{303030} 0.749}           & {\color[HTML]{303030} 0.749}             & {\color[HTML]{303030} 0.877}          \\
\multicolumn{1}{l}{}                                 & \multicolumn{1}{l}{}                         & \multicolumn{1}{l}{}                  & \multicolumn{1}{l}{}                   & \multicolumn{1}{l}{}                     & \multicolumn{1}{l}{}                  \\
\multicolumn{1}{c|}{\textit{\textbf{n=5,sum-delay}}} & {\color[HTML]{303030} ATGA}                  & {\color[HTML]{303030} TGA}            & {\color[HTML]{303030} Single deadline} & {\color[HTML]{303030} Multiple deadline} & {\color[HTML]{303030} SCS}            \\ \hline
\multicolumn{1}{c|}{\textbf{Uniform(0,1)}}           & {\color[HTML]{303030} 1.462(95\%)}           & {\color[HTML]{303030} 1.503}          & {\color[HTML]{303030} \textbf{1.415}}  & {\color[HTML]{303030} \textbf{1.415}}    & {\color[HTML]{303030} \textbf{1.415}} \\
\multicolumn{1}{c|}{\textbf{Beta(0.5,0.5)}}          & {\color[HTML]{303030} 2.279(92\%)}           & {\color[HTML]{303030} 2.12}           & {\color[HTML]{303030} \textbf{1.955}}  & {\color[HTML]{303030} \textbf{1.955}}    & {\color[HTML]{303030} \textbf{1.955}} \\
\multicolumn{1}{c|}{\textbf{Bernoulli(0.5)}}         & {\color[HTML]{303030} \textbf{1.146(100\%)}} & {\color[HTML]{303030} 1.867}          & {\color[HTML]{303030} 2.106}           & {\color[HTML]{303030} 1.711}             & {\color[HTML]{303030} 2.523}          \\
\multicolumn{1}{c|}{\textbf{50\% 0, 50\% 0.8}}       & {\color[HTML]{303030} 2.432(94\%)}           & {\color[HTML]{303030} 2.845}          & {\color[HTML]{303030} 2.323}           & {\color[HTML]{303030} \textbf{2.248}}    & {\color[HTML]{303030} 2.667}          \\
\textbf{}                                            & {\color[HTML]{303030} \textbf{}}             & {\color[HTML]{303030} }               & {\color[HTML]{303030} }                & {\color[HTML]{303030} }                  & {\color[HTML]{303030} }               \\
\multicolumn{1}{c|}{\textit{\textbf{n=5,max-delay}}} & {\color[HTML]{303030} ATGA}                  & {\color[HTML]{303030} TGA}            & {\color[HTML]{303030} Single deadline} & {\color[HTML]{303030} Multiple deadline} & {\color[HTML]{303030} SCS}            \\ \hline
\multicolumn{1}{c|}{\textbf{Uniform(0,1)}}           & {\color[HTML]{303030} 0.677(91\%)}           & {\color[HTML]{303030} 0.677}          & {\color[HTML]{303030} \textbf{0.662}}  & {\color[HTML]{303030} \textbf{0.662}}    & {\color[HTML]{303030} 0.678}          \\
\multicolumn{1}{c|}{\textbf{Beta(0.5,0.5)}}          & {\color[HTML]{303030} 0.754(79\%)}           & {\color[HTML]{303030} 0.75}           & {\color[HTML]{303030} \textbf{0.73}}   & {\color[HTML]{303030} \textbf{0.73}}     & {\color[HTML]{303030} 0.827}          \\
\multicolumn{1}{c|}{\textbf{Bernoulli(0.5)}}         & {\color[HTML]{303030} 0.506(100\%)}          & {\color[HTML]{303030} \textbf{0.50}}  & {\color[HTML]{303030} 0.577}           & {\color[HTML]{303030} \textbf{0.50}}     & {\color[HTML]{303030} 0.971}          \\
\multicolumn{1}{c|}{\textbf{50\% 0, 50\% 0.8}}       & {\color[HTML]{303030} \textbf{0.666(80\%)}}  & {\color[HTML]{303030} 0.751}          & {\color[HTML]{303030} 0.736}           & {\color[HTML]{303030} 0.679}             & {\color[HTML]{303030} 0.968}
\\
\\
\end{tabular}
\caption{
We see that ATGA performs well in many settings. If we focus
    on {\em provable} strategy-proof mechanisms, then TGA and the optimal multiple deadline mechanism also often perform better than the serial cost sharing mechanism.
    }
\end{table}

\section{Minimizing Worst-Case Max-Delay and Sum-Delay}
\label{sec:prior-free}

In this section, we focus on prior-free mechanism design, without assuming that we have the prior distribution over the agents' types.  For both Max-Delay and
Sum-Delay, the notion of optimal mechanism is not well-defined.  Given
two mechanisms $A$ and $B$, mechanism $A$ may outperform mechanism $B$ under
some type profiles, and vice versa for some other type profiles.

We adopt the following dominance relationships for comparing mechanisms.

\begin{definition}

    Mechanism $A$ \textsc{Max-Delay-Dominates} mechanism $B$, if
    \begin{itemize}

        \item for \emph{every} type profile, the Max-Delay under
            mechanism $A$ is \emph{at
            most}\footnote{\label{footnote:tie-breaking}Tie-breaking detail:
            given a type profile, if under $A$, the project is not funded (max delay
            is $1$), and under $B$, the project is funded (the max delay happens to
            be also $1$), then we interpret that the max delay under $A$ is
            \emph{not} at most that under $B$.} that under mechanism $B$.

        \item for \emph{some} type profiles, the Max-Delay under
            mechanism $A$ is \emph{less than} that under mechanism $B$.

    \end{itemize}

    A mechanism is \textsc{Max-Delay-Undominated}, if it is not dominated by
    any \emph{strategy-proof} and \emph{individually rational} mechanisms.

\end{definition}

\begin{definition}

    Mechanism $A$ \textsc{Sum-Delay-Dominates} mechanism $B$, if

    \begin{itemize}

        \item for \emph{every} type profile, the Sum-Delay under
            mechanism $A$ is \emph{at most} that under mechanism $B$.

        \item for \emph{some} type profiles, the Sum-Delay under mechanism
            $A$ is \emph{less than} that under mechanism $B$.

    \end{itemize}

    A mechanism is \textsc{Sum-Delay-Undominated}, if it is not dominated by
    any \emph{strategy-proof} and \emph{individually rational} mechanisms.

\end{definition}

% For our model, one trivial mechanism works as follows:

% \vspace{.1in}
% \noindent\framebox{\parbox{\textwidth}{%
%     \begin{center}
%     \textbf{Cost Sharing (CS)}
%     \end{center}

%     \flushright{%
%     Strategy-proofness: Yes

%     Individual rationality: Yes

%     Ex post budget balance: Yes}

%     \begin{itemize}

%         \item Consider the following set: \[ K = \{ k\ \vert\ \textnormal{$k$
%             values among the $v_i$ are at least $1/k$}, 1\le k \le n \} \]

%         \item If $K$ is empty, then the bug is not sold. Every agent's
%             allocation time is $1$ and pays $0$.

%         \item If $K$ is not empty, then the highest $k^* = \max K$ agents
%             each pays $1/k^*$ and receives the bug at time $0$. The other
%             agents receive the bug at time $1$ and each pays $0$.

%     \end{itemize}
% }}
% \vspace{.1in}

% The above mechanism is strategy-proof, individually rational, and ex post
% budget balanced.  Under the mechanism, $k^*$ agents join in the cost sharing
% and their delays are $0$s, but the remaining agents all have the maximum delay
% $1$. Both the \textsc{Max-Delay} and the \textsc{Sum-Delay} are bad when $k^*$
% is small.  One natural improvement is as follows:

We first describe the {\em fixed deadline mechanism}. This mechanism resembles
the single deadline mechanism, but the main difference is that,
when the deadline arrives, all agents (including the free riders) get to consume
the project (if they haven't already started consuming), {\em even if
there isn't enough payment collected to cover the project cost}.

\vspace{.1in}
\noindent\framebox{\parbox{\textwidth}{%
    \begin{center}
    \textbf{Fixed Deadline Mechanism}
    \end{center}

    \flushright{%
    Strategy-proofness: Yes

    Individual rationality: Yes

    {\bf Ex post budget balance: No}}

    \begin{itemize}

        \item Set a fixed deadline of $0\le t_C \le 1$. Under the mechanism,
            an agent's allocation time is at most $t_C$.

        \item Consider the following set: \[ K = \{ k\ \vert\ \textnormal{$k$
            values among the $v_i$ are at least $\frac{1}{kt_C}$}, 1\le k \le n
            \} \]

        \item If $K$ is empty, then the project is not built. Every agent's
            allocation time is $t_C$ and pays $0$.

        \item If $K$ is not empty, then the highest $k^* = \max K$ agents
            each pays $1/k^*$ and have their release time set to $0$. The other
            agents start consuming the project at time $t_C$ and each pays $0$.

    \end{itemize}
}}
\vspace{.1in}

The idea essentially is that we run the serial cost sharing mechanism
on the time interval $[0,t_C]$, and every agent receives the time interval $[t_C,1]$
\emph{for free}.   The mechanism remains strategy-proof and
individually rational.  Unfortunately, the mechanism is not ex post budget
balanced---even if the cost sharing failed (\emph{e.g.}, $K$ is empty), we
still need to release the project to the agents at time $t_C$ for free. If $t_C<1$,
we have to fund the public project without collecting back any payments.

The reason we describe the fixed deadline mechanism is because our final
mechanism uses it as a sub-component, and the way it is used fixes the budget
balance issue.

% \begin{claim}
%     Cost Sharing with Deadline is strategy-proof and individually rational.
% \end{claim}

% \begin{proof}
% We focus on agent $i$.
% Consider the following set:
%     \[ K' = \{ k\ \vert\ \textnormal{$k-1$
%     values among the $v_{-i}$ are at least $\frac{1}{kt_C}$}, 1\le k \le n \} \]
% $K'$ is never empty. Agent $i$ faces two possible outcomes:

% \begin{itemize}

%     \item pay $0$ and receive the bug at time $t_C$

%     \item pay $1/k^*$ ($k^*=\max K'$) and receive the bug at time $0$

% \end{itemize}
% Reporting truthfully always leads to the better option.

% Once we have SP, IR can be trivially obtained.  If $i$ reports $0$, then her
% payment is $0$, which ensures nonnegative utility. Therefore, by
% reporting truthfully, she also ensures nonnegative utility.\qed%
% \end{proof}

\begin{example}
    Let us consider the type profile $(0.9, 0.8, 0.26, 0.26)$. We run the
    fixed deadline mechanism
    using different $t_C$ values:

    \begin{itemize}

        \item If we set $t_C=0.9$, then agent $1$ and $2$ would start consuming
            the project
            at time $0$ and each pays $0.5$. Agent $3$ and $4$ pay nothing but they have to wait until time $0.9$.

        \item If we set $t_C=0.7$, then agent $1$ and $2$ would still start consuming
            the project
            at time $0$ and each pays $0.5$. Agent $3$ and $4$ pay nothing but they only
            need to wait until $0.7$. This is obviously better.

        \item If we set $t_C=0.5$, then all agents pay $0$ and only wait until
            $0.5$.  However, we run into budget issue in this scenario.

    \end{itemize}

We need $t_C$ to be small, in order to have shorter delays.  However, if $t_C$
    is too small, we have budget issues. The optimal $t_C$ value depends on the
    type profile.  For the above type profile, the optimal
    $t_C=\frac{0.5}{0.8}=0.625$.  When $t_C=0.625$, agent $2$ is still willing
    to pay $0.5$ for the time interval $[0,0.625]$ as $0.8\times 0.625=0.5$.
\end{example}

\begin{definition}

    Given a type profile $(v_1,v_2,\dots,v_n)$, consider the following set:

    \[ K(t_C) = \{ k\ \vert\ \textnormal{$k$ values among the $v_i$ are at least
    $\frac{1}{kt_C}$}, 1\le k \le n \} \]

    $t_C$ is between $0$ and $1$. As $t_C$ becomes smaller, the set $K(t_C)$
    also becomes smaller.  Let $t_C^*$ be the minimum value so that $K(t_C^*)$
    is not empty.  If such $t_C^*$ does not exist (\emph{e.g.}, $K(1)$ is
    empty), then we set $t_C^*=1$.

    $t_C^*$ is called the \textbf{optimal deadline} for this type profile.

\end{definition}

Instead of using a constant deadline, we may pick the optimal deadline {\em for
every given type profile}.

\vspace{.1in}
\noindent\framebox{\parbox{\textwidth}{%
    \begin{center}
    \textbf{Optimal Deadline Mechanism}
    \end{center}

    \flushright{%
    {\bf Strategy-proofness: No}

    Individual rationality: Yes

    Ex post budget balance: Yes}

    \begin{itemize}
        \item For every type profile, we calculate its optimal deadline.
        \item We run the fixed deadline mechanism using the optimal deadline.
    \end{itemize}
}}
\vspace{.1in}

The optimal deadline mechanism is ex post budget balanced. If we cannot find $k$ agents to pay
$1/k$ each for any $k$, then the optimal deadline is $1$ and the cost sharing
failed. In this case, we simply do not fund the project.

Unfortunately, we gained some and lost some. The ``optimal'' deadline
makes the mechanism not strategy-proof.

\begin{example} Let us re-consider the type profile $(0.9, 0.8, 0.26, 0.26)$.
    The optimal deadline for this type profile is $0.625$.  By reporting
    truthfully, agent $2$ receives the project at time $0$ and pays $0.5$.
    However, she can lower her type to $0.26$ (the optimal deadline is now
slightly below $1$). Agent $2$ still receives the project at time $0$ but only pays
$0.25$.  \end{example}

Despite not being strategy-proof,
we can prove one nice property of the optimal deadline mechanism.

\begin{theorem} The optimal deadline mechanism is both
\textsc{Max-Delay-Undominated} and \textsc{Sum-Delay-Undominated}.
\end{theorem}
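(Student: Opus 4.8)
The plan is to argue by contradiction, and throughout I treat the competing mechanisms as strategy-proof, individually rational, \emph{and} ex post budget balanced, since these are the standing constraints on every mechanism in our model. This is essential: without budget balance one could trivially beat the optimal deadline mechanism by always releasing at a small deadline and running a deficit (exactly the defect of the fixed deadline mechanism), so any meaningful notion of domination must hold budget balance fixed. Denote the optimal deadline mechanism by $\mathit{OPT}$ and suppose some strategy-proof, individually rational, budget balanced $A$ \textsc{Max-Delay-Dominates} it; I will treat \textsc{Max-Delay} first, the \textsc{Sum-Delay} case being analogous. Domination hands me two facts to exploit: for every profile the max-delay under $A$ is at most that under $\mathit{OPT}$ (so whenever $\mathit{OPT}$ builds, the tie-breaking convention forces $A$ to build with no larger max-delay), and for at least one profile $A$ is strictly better.

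Next I would assemble three tools. First, a feasibility inequality from individual rationality and budget balance: if $A$ builds on a profile then $\sum_i p_i = 1$ while $p_i \le v_i(1-t_i)$, hence $\sum_i v_i(1-t_i)\ge 1$; in particular an agent can be charged close to $1$ only if she has high value \emph{and} receives almost the whole interval. Second, the standard consequences of strategy-proofness in this single-parameter setting: fixing the other reports, agent $i$'s allocated length $1-t_i$ is monotone nondecreasing in her own report, and any two reports receiving the \emph{same} allocated length must incur the \emph{same} payment. Third, a ``pinning'' consequence of domination: on every profile where $\mathit{OPT}$ attains max-delay $0$ (all agents are consumers, e.g.\ whenever every value is large enough that the whole population cost-shares), $A$ must also attain max-delay $0$, i.e.\ it must hand every agent the full interval $[0,1]$.

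To reach the contradiction I would take a profile $\vec{v}^0$ on which $A$ is strictly better: there $A$ builds with max-delay strictly below $\mathit{OPT}$'s optimal deadline, so every agent gets an allocation strictly longer than $\mathit{OPT}$ gives, yet budget balance still forces the payments to sum to $1$. By the feasibility inequality some high-value agent must then pay nearly her whole value for a nearly-full interval. I then propagate this payment through a short chain of profiles that alternately invokes budget balance (to pin a payment once an allocation is fixed) and strategy-proofness (to transport that payment, via equal-allocation/equal-payment, to a different profile), each step keeping the relevant agent's allocated length pinned at the full interval by the domination-pinning tool applied to a neighbouring max-delay-$0$ profile. The chain drives a large payment down onto a low-value type and violates individual rationality. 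This is precisely the collapse already visible for $n=2$, where building on $(0.9,0.2)$ forces (through $(0.9,0.6)$) the value-$0.6$ agent on $(0.6,0.6)$ to pay $0.8$ for the full interval. The existence of $\vec{v}^0$ is thereby contradicted, so $A$ can never beat $\mathit{OPT}$ on any profile and cannot strictly dominate it.

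The main obstacle is making this chaining work for general $n$ and arbitrary profiles rather than the two-agent illustration: I must choose the connecting profiles so that the allocations I rely on stay pinned to the full interval (via domination on adjacent max-delay-$0$ profiles) while some payment is simultaneously pushed past its individual-rationality bound, which requires care about which agents are consumers and about the exact value of the optimal deadline $t_C^\ast$. A secondary point is re-running the aggregation for \textsc{Sum-Delay}: the per-agent pinning and the strategy-proofness transport are identical, but the ``strictly better'' hypothesis must now be read through the total release time rather than the maximum, so the witness profile and the chain must be adapted accordingly. I also need to dispatch the tie-breaking edge cases where both mechanisms have max-delay $1$ but differ in whether the project is funded.
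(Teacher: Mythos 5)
Your toolkit is the right one (Myerson monotonicity, equal-allocation-implies-equal-payment, individual rationality plus budget balance, chains of adjacent profiles), and you are right that budget balance of the competing mechanism must be held fixed. But the architecture of your chain has a genuine gap: you propagate payment bounds \emph{outward from} the witness profile $\vec{v}^0$ of strict improvement, yet on that profile the dominating mechanism $A$ is constrained only by a weak inequality (max-delay at most that of the optimal deadline mechanism, which there is positive or even $1$), so the allocations you need for the equal-allocation/equal-payment transport are not pinned and the bounds leak. Your own $n=2$ illustration shows this: on $(0.9,0.2)$ the mechanism $A$ may give agent $2$ the interval $[0.5,1]$ for a payment of $0.1$ and charge agent $1$ the remaining $0.9$ for $[0,1]$; strategy-proofness then yields only $p_2(0.9,0.6)\le 0.1+0.6\cdot 0.5=0.4$, hence $p_1(0.9,0.6)\ge 0.6$ and $p_1(0.6,0.6)\ge 0.6$, which sits exactly at the individual-rationality boundary and produces no violation. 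Your claim that on $\vec{v}^0$ every agent gets a strictly longer allocation than under the optimal deadline mechanism is also false for that mechanism's consumers, who already have delay $0$ there.

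The paper closes this by running the chain in the opposite direction: it anchors at profiles where budget balance and individual rationality leave $A$ \emph{no} freedom at all, namely $(1/k^*,\dots,1/k^*,0,\dots,0)$, where the first $k^*$ agents must each pay exactly $1/k^*$ for the full interval, and then propagates forward (replacing one coordinate at a time, using monotonicity and equal-allocation/equal-payment) to show that $A$ coincides with the optimal deadline mechanism on \emph{every} profile, including the free riders' release times and payments and the unfunded profiles (the latter handled via the type-$1$ endpoint together with monotonicity of utility in one's own type). The contradiction is then simply that $A$ can never be strictly better anywhere. To rescue your version you would need to replace the backward chain with this forward propagation from the fully pinned anchor profiles; the local feasibility inequality $\sum_i v_i(1-t_i)\ge 1$ at the witness profile is not enough on its own, and the case where the optimal deadline mechanism does not build (which is where your $(0.9,0.2)$ example lives) cannot be dispatched without first establishing the funded case in full.
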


\begin{proof}

    We first focus on \textsc{Max-Delay-Undominance}.  Let $M$ be a
    strategy-proof and individually rational mechanism that
    \textsc{Max-Delay-Dominates} the optimal deadline mechanism.
    % Without loss of generality, we assume $M$ is \emph{anonymous}, that is
    % agents who report the same get the
    We will prove by contradiction
    that such a mechanism does not exist.

    Let $(v_1,v_2,\dots,v_n)$ be an arbitrary type profile. Without loss of
    generality, we assume $v_1\ge v_2\ge\dots v_n$. We will show that $M$'s
    allocations and payments must be identical to that of the optimal deadline
    mechanism for
    this type profile. That is, $M$ must be identical to the optimal deadline
    mechanism, which
    results in a contradiction.

    We first consider type profiles under which the project is funded
    under the optimal deadline mechanism.
    We still denote the type profile under discussion by
    $(v_1,v_2,\dots,v_n)$. Let $k^*$ be the number of agents who participate in
    the cost sharing under the optimal deadline mechanism.

    We construct the following type profile:

    \begin{equation}\label{tp:trivial}
        (\underbrace{1/k^*, \dots,1/k^*}_{k^*},0,\dots,0)
    \end{equation}

    For the above type profile, under the optimal deadline mechanism, the first $k^*$ agents
    receive the project at time $0$ and each pays $1/k^*$. By dominance assumption
    (both \textsc{Max-Delay-Dominance} and \textsc{Sum-Delay-Dominance}), under
    $M$, the project must be funded. To collect $1$, the first $k^*$ agents must
    each pays $1/k^*$ and must receive the project at time $0$ due to individual
    rationality.

    Let us then construct a slightly modified type profile:

    \begin{equation}\label{tp:v1}
    (v_1,\underbrace{1/k^*, \dots,1/k^*}_{k^*-1},0,\dots,0)
    \end{equation}

    Since $v_1\ge 1/k^*$, under $M$, agent $1$ must still receive the project at
    time $0$ due to Myerson's monotonicity characterization.  Agent $1$'s payment must
    still be $1/k^*$. If the new payment is lower, then had agent $1$'s true
    type been $1/k^*$, it is beneficial to report $v_1$ instead. If the new
    payment is higher, then agent $1$ benefits by reporting $1/k^*$ instead.
    Agent $2$ to $k^*$ still pay $1/k^*$ and receive the project at time $0$ due to
    individual rationality.

    We repeat the above reasoning by constructing another slightly modified
    type profile:

    \begin{equation}\label{tp:v2}
    (v_1,v_2,\underbrace{1/k^*, \dots,1/k^*}_{k^*-2},0,\dots,0)
    \end{equation}

    Due to the monotonicity constraint, agent $2$ still pays $1/k^*$ and
    receives the project at time $0$. Had agent $1$ reported $1/k^*$, he would
    receive the project at time $0$ and pay $1/k^*$, so due to the monotonicity
    constraint, agent $1$ still pays $1/k^*$ and receives the project at time $0$
    under type profile~\eqref{tp:v2}. The rest of the agents must be
    responsible for the remaining $(k^*-2)/k^*$, so they still each pays
    $1/k^*$ and receives the project at time $0$.

    At the end, we can show that under $M$, for the following type profile, the
    first $k^*$ agents each pays $1/k^*$ and must receive the project at $0$.

    \begin{equation}\label{tp:vk}
        (v_1,v_2,\dots,v_{k^*},0,\dots,0)
    \end{equation}

    For the above type profile~\eqref{tp:vk}, there are $n-k^*$ agents
    reporting $0$s.  For such agents, their payments must be $0$ due to
    individual rationality. Since $M$ \textsc{Max-Delay-Dominates}\footnote{The
    claim remains true if we switch to \textsc{Sum-Delay-Dominance}.}
    the optimal deadline mechanism, these agents' allocation time must be at most
    $\frac{1}{k^*v_{k^*}}$, which is their allocation time under the optimal deadline
    mechanism.  We show that they cannot receive the
    project strictly earlier than $\frac{1}{k^*v_{k^*}}$ under $M$.

    Let us consider the following type profile:

    \begin{equation}\label{tp:vk1}
        (v_1,v_2,\dots,v_{k^*},\frac{k^*v_{k^*}}{k^*+1},\dots,0)
    \end{equation}

For type profile~\eqref{tp:vk1}, agent $k^*+1$ must receive the project at time
    $0$ and pay $1/(k^*+1)$.  She can actually benefit by reporting $0$
    instead, if under type profile~\eqref{tp:vk}, agents reporting $0$ receive
    % the project at $\frac{1}{k^*v_{k^*}}^* <\frac{1}{k^*v_{k^*}}$ for free.
    the project earlier than $\frac{1}{k^*v_{k^*}}$ for free.
   % \begin{align*}
   %     \textnormal{utility for reporting truthfully} &= \frac{k^*v_{k^*}}{k^*+1}-\frac{1}{k^*+1},
   %     \textnormal{utility for reporting $0$} =\\ \frac{k^*v_{k^*}}{k^*+1}(1-\frac{1}{k^*v_{k^*}}^*) &> \frac{k^*v_{k^*}}{k^*+1}(1-\frac{1}{k^*v_{k^*}}) = \frac{k^*v_{k^*}}{k^*+1}-\frac{1}{k^*+1}
   %  \end{align*}

    Therefore, for type profile~\eqref{tp:vk}, all agents who report $0$ must
    receive the project at exactly $\frac{1}{k^*v_{k^*}}$. That is, for type
    profile~\eqref{tp:vk}, $M$ and the optimal deadline mechanism are equivalent.

    Now let us construct yet another modified type profile:

    \begin{equation}\label{tp:vk1true}
        (v_1,v_2,\dots,v_{k^*},v_{k^*+1},0,\dots,0)
    \end{equation}

    Here, we must have $v_{k^*+1}<\frac{k^*v_{k^*}}{k^*+1}$. Otherwise, under
    the original type profile, we would have more than $k^*$ agents who join
    the cost sharing. We assume under $M$, agent $k^*+1$ receives the project at
    time $t$ and pays $p$. $t$ is at most $\frac{1}{k^*v_{k^*}}$ due to the
    monotonicity constraint. We have

    \begin{align*}
     \textnormal{utility when the true type is $v_{k^*+1}$ and reporting truthfully} &=v_{k^*+1}(1-t)-p\\
     \textnormal{utility when the true type is $v_{k^*+1}$ and reporting $0$} &= v_{k^*+1}(1-\frac{1}{k^*v_{k^*}})
    \end{align*}

    Therefore,
    \begin{equation}\label{eq:oneside}
    v_{k^*+1}(1-t)-p\ge v_{k^*+1}(1-\frac{1}{k^*v_{k^*}})
     % v_{k^*+1}t+p \le \frac{v_{k^*+1}}{k^*v_{k^*}}
    \end{equation}

    Had agent $k^*+1$'s type been $\frac{k^*v_{k^*}}{k^*+1}$, her utility for reporting
    her true type must be at least her utility when reporting $v_{k^*+1}$. That is,

    \begin{align*}
    \textnormal{utility when the true type is $\frac{k^*v_{k^*}}{k^*+1}$ and reporting truthfully} &= \frac{k^*v_{k^*}}{k^*+1}-\frac{1}{k^*+1} \\
        \textnormal{utility when the true type is $\frac{k^*v_{k^*}}{k^*+1}$ and reporting $v_{k^*+1}$} &= \frac{k^*v_{k^*}}{k^*+1}(1-t) -p
    \end{align*}

    That is,
    \begin{equation}\label{eq:twoside}
    \frac{k^*v_{k^*}}{k^*+1}-\frac{1}{k^*+1} \ge \frac{k^*v_{k^*}}{k^*+1}(1-t) -p
    \end{equation}

    Combine Equation~\eqref{eq:oneside}, Equation~\eqref{eq:twoside},
    $v_{k^*+1}<\frac{k^*v_{k^*}}{k^*+1}$, and $t\le\frac{1}{k^*v_{k^*}}$, we
    have $p=0$ and $t=\frac{1}{k^*v_{k^*}}$.  That is, under type
    profile~\eqref{tp:vk1true}, agent $k^*+1$'s allocation and payment remain
    the same whether she reports $0$ or $v_{k^*+1}$.

    Repeat the above steps, we can show that under the following arbitrary
    profile, agent $k^*+2$ to $n$'s allocation and payment also remain the same
    as when they report $0$.

    \begin{equation}\label{tp:vkn}
        (v_1,v_2,\dots,v_{k^*},v_{k^*+1},v_{k^*+2},\dots,v_n)
    \end{equation}

    That is, for type profiles where the project is funded under the optimal
    deadline mechanism, $M$ and the optimal deadline mechanism are equivalent.

    We then consider an arbitrary type profile for which the project is not funded
    under the optimal deadline mechanism.
    Due to the monotonicity constraint, an agent's
    utility never decreases when her type increases. If any agent $i$ receives
    the project at time $t$ that is strictly before $1$ and pays $p$, then due to
    the individual rationality constraint, we have that $v_i(1-t)-p\ge 0$.
    $v_i$ must be strictly below $1$, otherwise the project is funded under
    the optimal deadline mechanism.  Had agent $i$'s true type been higher but still below $1$
    (say, to $v_i+\epsilon$), her utility must be positive, because she can
    always report $v_i$ even when her true type is $v_i+\epsilon$.  But earlier
    we proved that had $v_i$'s true type been $1$, she would receive the project at
    time $0$ and pay $1$. Her utility is $0$ when her type is $1$.  This means
    her utility decreased if we change her true type from $v_i+\epsilon$ to
    $1$, which is a contradiction. That is, all agents must receive the project at
    time $1$ (and must pay $0$). Therefore, for an arbitrary type profile for
    which the project is not funded under the optimal deadline mechanism, $M$ still behaves the same as the optimal deadline mechanism.

    In the above proof, all places where we reference
    \textsc{Max-Delay-Dominance} can be changed to
    \textsc{Sum-Delay-Dominance}.
\end{proof}

The optimal deadline mechanism is both \textsc{Max-Delay-Undominated} and
\textsc{Sum-Delay-Undominated}, but it is not strategy-proof.  We now propose
our final mechanism in this section.
The new
mechanism is strategy-proof and its delay is within a constant factor of
the optimal deadline mechanism.\footnote{That is, we fixed the strategy-proofness issue at the
cost of longer delays, but it is within a constant factor.}

\vspace{.1in}
\noindent\framebox{\parbox{\textwidth}{%
    \begin{center}
    \textbf{Group-Based Optimal Deadline Mechanism}
    \end{center}

    \flushright{%
    Strategy-proofness: Yes

    Individual rationality: Yes

    Ex post budget balance: Yes}

    \begin{itemize}
        \item For agent $i$, we flip a fair coin to randomly assign her to either the left group or the right group.
        \item We calculate the optimal deadlines of both groups.
        \item We run the fixed deadline mechanism on both groups.
        \item The left group uses the optimal deadline from the right group and vice versa.
    \end{itemize}
}}
\vspace{.1in}

\begin{claim} The group-based optimal deadline mechanism
is strategy-proof, individually rational, and ex post budget balanced.
\end{claim}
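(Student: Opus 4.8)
The plan is to verify all three properties \emph{ex post} over the coin tosses, i.e. for every fixed assignment of agents to the two groups, and to reduce each property to the corresponding property of the fixed deadline mechanism, which the excerpt already establishes to be strategy-proof and individually rational. The single structural fact that makes this reduction work is that the deadline applied to a group is the optimal deadline of the \emph{other} group, so it is independent of the reports of the agents to whom it is applied. Throughout I would write $t_C^L$ and $t_C^R$ for the optimal deadlines of the left and right groups, recalling that a group's feasibility set $K(t_C)$ is nondecreasing in $t_C$, so a group's serial cost share succeeds at a deadline if and only if that deadline is at least its own optimal deadline.

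For strategy-proofness and individual rationality, I would fix an agent $i$, say in the left group. Since $i$ is not in the right group, the applied deadline $t_C^R$ does not depend on $i$'s report, and given $t_C^R$ the agent faces exactly the fixed deadline mechanism; hence truthful reporting is dominant and $i$'s utility is nonnegative. The subtle point is that $i$'s free part must be delivered unconditionally, exactly as in the stand-alone fixed deadline mechanism, rather than being made contingent on the project actually being funded (otherwise $i$ could try to influence the other group's outcome and hence the delivery of her own free part). I would reconcile this with never giving away an unbuilt project by proving that whenever an agent's applied deadline is strictly below $1$, the project is necessarily built: assuming without loss of generality $t_C^L \le t_C^R$, if an applied deadline is below $1$ then $t_C^L<1$, so the left group is fundable and, being assigned $t_C^R \ge t_C^L$, its cost share succeeds and funds the project.

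Ex post budget balance is the main obstacle, because the fixed deadline sub-component is itself not budget balanced. Using the same monotonicity, with $t_C^L \le t_C^R$ the left group (assigned $t_C^R \ge t_C^L$) succeeds and collects exactly $1$, while the right group (assigned $t_C^L \le t_C^R$) fails and collects $0$; thus the total payment equals the single project cost, and the right group's agents are legitimately served their free parts because the project is globally built. I would dispatch the boundary cases separately: when the larger optimal deadline equals $1$ the associated free part is empty and that group reduces to plain serial cost sharing on $[0,1]$, and when both groups are unfundable the project is simply not built, with no payments and no releases.

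The delicate case, which I expect to be the real crux, is the tie $t_C^L = t_C^R < 1$: here both groups succeed at their applied deadlines and each naively collects $1$, over-funding the project. I would resolve this with a deterministic tie-breaking rule that nominates a single funding group and demotes the other group's would-be payers to free riders at the common deadline, and then re-verify exact budget balance. The hard part is checking that this override does not reopen a manipulation: because for any agent the quantities entering the tie-break are computed from the opposite group, I would argue that her own report cannot flip the tie-break to her advantage, so that strategy-proofness and exact budget balance are preserved simultaneously. Establishing this last point is the step demanding the most care.
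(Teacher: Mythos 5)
Your overall route is the same as the paper's: reduce strategy-proofness and individual rationality to the fixed deadline mechanism via the observation that each group's applied deadline is computed from the \emph{other} group, and obtain budget balance from the fact that exactly one group's cost share succeeds when the two optimal deadlines differ. Your handling of the non-tie cases is fine, and your observation that an applied deadline below $1$ forces the project to be built is a clean way to justify handing out the free parts unconditionally. The paper disposes of the tie $t_C^L=t_C^R<1$ with a single sentence (``tie-break in favour of the left group''), and you rightly flag re-verifying strategy-proofness under that override as the crux. The gap is that the reason you offer for why it should go through is false: the tie-break is the comparison of $t_C^L$ with $t_C^R$, and an agent in, say, the right group influences $t_C^R$ through her own report. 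She can therefore flip the tie-break, and sometimes to her strict advantage.

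Concretely, let the left group report $(0.5,0.5,0.5,0.5)$, so $t_C^L=0.5$, and let the right group's true values be $(0.7,0.69,0.6,0.5)$, so $t_C^R=\min_k \frac{1}{k\,v_{(k)}}=\frac{1}{4\times 0.5}=0.5$ as well. Under the tie-break the right group is demoted: all four of its agents become free riders at time $0.5$, and the agent with value $0.6$ gets utility $0.6\times 0.5=0.3$. If she instead reports $0.69$, the right group's sorted reports become $(0.7,0.69,0.69,0.5)$ and $t_C^R$ drops to $\frac{1}{3\times 0.69}\approx 0.483<0.5$, so the right group now strictly wins; its cost share at the applied deadline $0.5$ succeeds with $k^*=4$, she pays $0.25$ and is released at time $0$, for true utility $0.6-0.25=0.35>0.3$. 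So the naive ``demote one group on a tie'' rule is not strategy-proof, and no argument of the form you sketch can rescue it --- the tie-breaking rule itself would have to be redesigned so that whether a group's cost share is cancelled does not depend on that group's own reports in this discontinuous way. In fairness, the paper's own proof is no more detailed here; you have matched it everywhere except that your proposed justification for the one genuinely delicate step rests on a premise that does not hold.
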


\begin{proof}
    Every agent participates in a fixed deadline mechanism so strategy-proofness and
    individual rationality hold. Let $D_L$ and $D_R$ be the optimal deadlines
    of the left and right groups, respectively. If $D_L < D_R$, then the left
    group will definitely succeed in the cost sharing, because its optimal
    deadline is $D_L$ and now they face an extended deadline. The right group
    will definitely fail in the cost sharing, as they face a deadline that is
    earlier than the optimal one.  At the end, some agents in the left group
    pay and receive the project at $0$, and the remaining agents in the left group
    receive the project at time $D_R$ for free.  All agents in the right group
    receive the project at time $D_L$ for free.  If $D_L > D_R$, the reasoning is
    the same. If $D_L=D_R < 1$, then we simply tie-break in favour of the left
    group. If $D_L=D_R=1$, then potentially both groups fail in the cost
    sharing, in which case, we simply do not fund the project.
\end{proof}

\begin{definition} Mechanism $A$ is \textsc{$\alpha$-Max-Delay-Competitive}
    against mechanism $B$ if for every agent $i$, every type profile, we have
    that the max delay under $A$ is at most $\alpha$ times the max delay under
    $B$.

    \textsc{$\alpha$-Sum-Delay-Competitive} is defined similarly.
\end{definition}

\begin{theorem}\label{th:4}
    The group-based optimal deadline mechanism is \textsc{$4$-Max-Delay-Competitive} against the optimal deadline mechanism under
    one additional technical assumption:
    \begin{itemize}
        \item At least one agent does not participate in the cost sharing under the optimal deadline mechanism. I.e., there is at least one free rider.
    \end{itemize}
\end{theorem}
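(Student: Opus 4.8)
The plan is to bound the \emph{expected} max-delay of the group-based mechanism (the expectation being over the random left/right assignment) by $4D^*$, where $D^*$ denotes the max-delay of the optimal deadline mechanism. Sorting the reports as $v_1\ge v_2\ge\dots\ge v_n$ and letting $k^*$ be the number of cost-sharers chosen by the optimal deadline mechanism, I would first record that the optimal deadline equals $D^*=\frac{1}{k^*v_{k^*}}$, that the free-rider assumption makes $D^*$ the actually realized max-delay, and that $v_{k^*}\le 1$ forces $D^*\ge \frac{1}{k^*}$. The case $D^*=1$ (project unfunded, or the boundary case $k^*=1$) is immediate, since every delay is at most $1\le 4D^*$, so I would then assume $D^*<1$, which forces $k^*\ge 2$.

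Next I would pin down the max-delay of the group-based mechanism for a \emph{fixed} coin outcome. As established in the proof of the preceding claim, the group with the smaller optimal deadline cost-shares successfully under the extended deadline, while every agent---the free riders of the winning group together with the entire losing group---is served no later than the \emph{larger} of the two optimal deadlines $D_L,D_R$. Hence for any coin outcome the max-delay is at most $\max(D_L,D_R)$.

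The key structural lemma I would prove is a subset-monotonicity bound: if a group $S$ receives $j\ge 1$ of the global top-$k^*$ agents, then, because each of those $j$ agents values the whole interval at least $v_{k^*}$, they can split the cost among themselves at deadline $\frac{1}{jv_{k^*}}$, so $D_S\le \frac{1}{jv_{k^*}}=\frac{k^*}{j}D^*$. Writing $j_L$ (resp.\ $j_R=k^*-j_L$) for the number of top-$k^*$ agents landing in the left (resp.\ right) group, this gives $\max(D_L,D_R)\le \frac{k^*}{\min(j_L,j_R)}D^*$; and since every delay is trivially at most $1\le k^*D^*$, the uniform bound $\max(D_L,D_R)\le \frac{k^*}{\max(1,\min(j_L,j_R))}D^*$ holds even when one group receives no top agent.

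Finally, since each top-$k^*$ agent joins the left group independently with probability $\frac12$, we have $j_L\sim\mathrm{Bin}(k^*,\tfrac12)$, and taking expectations reduces the theorem to the purely combinatorial inequality
\[
k^*\,E\!\left[\frac{1}{\max\!\big(1,\ \min(j_L,\,k^*-j_L)\big)}\right]\le 4 .
\]
I expect this last inequality to be the main obstacle: the difficulty is the minority group, which with small but non-negligible probability receives very few (or zero) of the top agents and hence carries a large deadline. Crude concentration (e.g.\ a Hoeffding tail bound separating the ``balanced'' event $\min\ge k^*/4$ from its complement) is too lossy to deliver the constant $4$, since the surviving tail term grows like $k^*e^{-k^*/8}$. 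I would therefore establish the inequality by a more careful accounting: direct evaluation of the binomial sum for small $k^*$ (the coefficient increases then decreases, peaking below $3.5$), together with the observation that for large $k^*$ the minimum concentrates near $k^*/2$, driving the coefficient towards $2$. This shows the coefficient never exceeds $4$ (indeed it stays near $3.44$), so the stated factor $4$ is a comfortable over-estimate and the argument is complete.
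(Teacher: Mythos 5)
Your proposal follows essentially the same route as the paper's proof: the same per-group bound $D_S\le \frac{k^*}{\max\{1,j\}}D^*$ via the top-$k^*$ agents landing in each group (with the $j=0$ case handled by $D^*\ge 1/k^*$), the same reduction of the expected max-delay to the binomial expectation $\sum_{k_L}\frac{1}{2^{k}}\binom{k}{k_L}\frac{k}{\max\{1,\min\{k_L,k-k_L\}\}}\le 4$, and the same split into numerical verification for small $k^*$ plus an asymptotic argument for large $k^*$. The only substantive difference is that the paper carries out the large-$k^*$ case with an explicit chain of binomial-coefficient inequalities (for even $k\ge 50$), whereas you leave that tail estimate as a sketch after correctly observing that a crude Hoeffding bound is too lossy; your numerical picture (peak near $3.44$ at $k^*=5$) matches what the paper's computation confirms.
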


The assumption is needed only
because in the single case of everyone joining the cost sharing under
the optimal deadline mechanism, the max delay is 0. While under the group-based
optimal deadline mechanism, the max delay is
always greater than 0 so it would not be competitive in this one case only.
% And for
% the other assumption,
As our system would welcome as many agents as possible,
it is expected that there are always agents who don't value the project very much
so that they would prefer to be free riders instead of participating in the
cost sharing under the optimal deadline mechanism.
For example, if the use case is crowd funding online novel series, and paying readers get to
            read the new chapters slightly earlier (enjoying a premium period), then
            it is not unrealistic to assume that free riders exist.

\begin{proof} Let us consider an arbitrary type profile that satisfy
    the additional
    technical assumption. We denote it by $(v_1,v_2,\dots,v_n)$.  Without loss
    of generality, we assume $v_1\ge v_2\ge\dots\ge v_n$.  Let $k^*$ be the
    number of agents who join the cost sharing under the optimal deadline mechanism.  The
    optimal deadline under the optimal deadline mechanism is then $D^*=\frac{1}{k^*v_{k^*}}$,
    which is exactly the max delay for this type profile.

    Under a specific random grouping, for the set of agents from $1$ to $k^*$,
    we assume $k_L$ agents are assigned to the left group and $k_R=k^*-k_L$
    agents are assigned to the right group.

    For the left group, the optimal deadline is at most
    $\frac{1}{k_{L}v_{k^*}}$ if $k_L\ge 1$, which is at most
    $\frac{k^*}{k_{L}}D^*$.  When $k_L=0$, the optimal deadline is at most $1$.
    Under the optimal deadline mechanism, since all types are at most $1$, the optimal deadline
    $D^*$ is at least $1/k^*$.  That is, if $k_L=0$, the optimal deadline of
    the left group is at most $k^*D^*$.

    In summary, the optimal deadline of the left group is at most
    $\frac{k^*}{k_L}D^*$ if $k_L\ge 1$ and $k^*D^*$ if $k_L=0$.  That is, the
    optimal deadline of the left group is at most $\frac{k^*}{\max\{1,
    k_L\}}D^*$

    Similarly, the optimal deadline of the right group is at most
    $\frac{k^*}{\max\{1, k_R\}}D^*$

    The max delay under the group-based optimal deadline mechanism is at most the worse of these two
    deadlines.  The ratio between the max delay under the group-based optimal deadline mechanism and the
    max delay under the optimal deadline mechanism is then at most $\frac{k^*}{\max\{1,
    \min\{k_L, k^*-k_L\}\}}$.

    We use $\alpha(k)$ to denote the expected ratio (expectation with regard to
    the random groupings):

    \begin{equation}\label{eq:alphak}
    \alpha(k)=\sum_{k_L=0}^{k}\frac{1}{2^{k}}{k\choose k_L}\frac{k}{\max\{1, \min\{k_L, k-k_L\}\}}
    \end{equation}

    We define $\beta(k)=\alpha(k)2^k$.
    \begin{equation*}
            \beta(k)=\sum_{k_L=0}^{k}{k\choose k_L}\frac{k}{\max\{1, \min\{k_L, k-k_L\}\}}
            =\sum_{k_L=1}^{k-1}{k\choose k_L}\frac{k}{\min\{k_L, k-k_L\}} + 2k
    \end{equation*}
    If $k$ is even and at least $50$, then
        \begin{align*}
            \beta(k)
            &=
            \sum_{k_L=1}^{k/2-1}{k\choose k_L}\frac{k}{\min\{k_L, k-k_L\}}
            +\sum_{k_L=k/2+1}^{k-1}{k\choose k_L}\frac{k}{\min\{k_L, k-k_L\}}
            + 2{k \choose k/2} + 2k \\
            &=
            2\sum_{k_L=1}^{k/2-1}{k\choose k_L}\frac{k}{k_L}
            + 2{k \choose k/2}+ 2k \\
        \end{align*}
        \begin{align*}
            \beta(k)
            &=
            2\sum_{k_L=1}^{k/2-1}{k+1\choose k_L+1}\frac{(k_L+1)k}{(k+1)k_L}
            + 2{k \choose k/2}+ 2k \\
            &\le
            4\sum_{k_L=1}^{k/2-3}{k+1\choose k_L+1} +
            2{k+1\choose k/2-1}\frac{(k/2-1)k}{(k+1)(k/2-2)}\\
            &+
            2{k+1\choose k/2}\frac{(k/2)k}{(k+1)(k/2-1)} +
            2{k+1 \choose k/2}+ 2k \\
            &\le
            4\sum_{k_L=1}^{k/2-3}{k+1\choose k_L+1} +
            2.1{k+1\choose k/2-1} + 4.1{k+1\choose k/2}+ 2k
        \end{align*}
        The ratio between ${k+1\choose k/2}$ and ${k+1\choose k/2-1}$ is at
        most $1.08$ when $k$ is at least $50$.
        \begin{equation*}
\beta(k) \le 4\sum_{k_L=1}^{k/2-3}{k+1\choose k_L+1} + 4{k+1\choose k/2-1} + 4{k+1\choose k/2}+ 2k \le 4\sum_{k_L=0}^{k/2-1}{k+1\choose k_L+1} \le 4\times 2^k
        \end{equation*}

    % If $k$ is odd and at least $51$, then
    %     \begin{align*}
    %         \beta(k)
    %         &=
    %         \sum_{k_L=1}^{\lfloor k/2\rfloor}{k\choose k_L}\frac{k}{\min\{k_L, k-k_L\}}
    %         +\sum_{k_L=\lceil k/2\rceil}^{k-1}{k\choose k_L}\frac{k}{\min\{k_L, k-k_L\}}
    %         + 2k \\
    %         &=
    %         2\sum_{k_L=1}^{\lfloor k/2 \rfloor}{k\choose k_L}\frac{k}{k_L}
    %         + 2k \\
    %         &=
    %         2\sum_{k_L=1}^{\lfloor k/2 \rfloor}{k+1\choose k_L+1}\frac{k(k_L+1)}{k_L(k+1)}
    %         + 2k \\
    %         &\le
    %         4\sum_{k_L=1}^{\lfloor k/2 \rfloor-3}{k+1\choose k_L+1}
    %         +2{k+1\choose \lfloor k/2\rfloor-1}\frac{k(\lfloor k/2\rfloor-1)}{(\lfloor k/2 \rfloor-2)(k+1)}\\
    %         &+2{k+1\choose \lfloor k/2\rfloor}\frac{k(\lfloor k/2\rfloor)}{(\lfloor k/2 \rfloor-1)(k+1)}
    %         +2{k+1\choose \lfloor k/2\rfloor+1 }\frac{k(\lfloor k/2\rfloor +1)}{(\lfloor k/2 \rfloor)(k+1)}
    %         + 2k \\
    %         &\le
    %         4\sum_{k_L=1}^{\lfloor k/2 \rfloor-3}{k+1\choose k_L+1}
    %         +2{k+1\choose \lfloor k/2\rfloor-1}1.05\\
    %         &+2{k+1\choose \lfloor k/2\rfloor}1.05
    %         +2{k+1\choose \lfloor k/2\rfloor+1 }1.05
    %         + 2k
    %     \end{align*}

% The ratio between ${k+1\choose \lfloor k/2\rfloor+1}$ and ${k+1\choose \lfloor
% k/2\rfloor-1}$ is at most $1.17$.

    %     \begin{equation*}
    %         \beta(k) \le
    %         4\sum_{k_L=1}^{\lfloor k/2 \rfloor-3}{k+1\choose k_L+1}
    %         +4{k+1\choose \lfloor k/2\rfloor-1}
    %         +4{k+1\choose \lfloor k/2\rfloor}
    %         + 2k \le 4\times 2^k
    %     \end{equation*}

We omit the similar proof when $k$ is odd. In summary, we have $\alpha(k)\le 4$ when $k\ge 50$. For smaller $k$, we
numerically calculated $\alpha(k)$. All values are below $4$.
\end{proof}

\begin{corollary}
    The group-based optimal deadline mechanism is \textsc{$8$-Sum-Delay-Competitive} against the optimal deadline mechanism under one additional
    technical assumption:
    \begin{itemize}
        \item At least half of the agents do not participate in the cost sharing under the optimal deadline mechanism. I.e., the majority of the agents are free riders.
    \end{itemize}
\end{corollary}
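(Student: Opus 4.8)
The plan is to reduce the sum-delay statement entirely to the max-delay machinery already established in Theorem~\ref{th:4}, at the cost of one extra factor of $2$ coming from the majority-free-rider assumption. Fix an arbitrary type profile with $v_1\ge v_2\ge\dots\ge v_n$ satisfying the assumption, and let $k^*$ be the number of cost-sharing agents under the optimal deadline mechanism, so that its optimal deadline is $D^*=\frac{1}{k^*v_{k^*}}$. Under the optimal deadline mechanism the $k^*$ payers have delay $0$ and each of the $n-k^*$ free riders has delay exactly $D^*$; hence the benchmark sum-delay is $(n-k^*)D^*$. The assumption that a majority are free riders gives $n-k^*\ge n/2$, so the benchmark is at least $\frac{n}{2}D^*$. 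This is precisely where the extra factor of $2$ (relative to the max-delay factor of $4$) is absorbed.

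Next I would bound the group-based mechanism's sum-delay by the crudest estimate available: every agent's delay is at most $\max\{D_L,D_R\}$, the larger of the two group deadlines. This is immediate from the case analysis establishing the group-based mechanism's correctness --- payers have delay $0$, non-payers in the succeeding group wait only until the other group's (smaller) deadline, and the entire failing group waits until the succeeding group's deadline. Consequently, for any fixed random grouping the sum-delay is at most $n\cdot\max\{D_L,D_R\}$.

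I would then import the per-grouping bound from the proof of Theorem~\ref{th:4}: writing $k_L$ and $k_R=k^*-k_L$ for how the top $k^*$ agents split across the groups, one has $\max\{D_L,D_R\}\le \frac{k^*}{\max\{1,\min\{k_L,k_R\}\}}D^*$. Combining this with the previous paragraph and taking expectation over the random grouping yields
\[
E[\text{sum-delay}]\ \le\ nD^*\cdot E\!\left[\frac{k^*}{\max\{1,\min\{k_L,k_R\}\}}\right]\ =\ nD^*\,\alpha(k^*)\ \le\ 4nD^*,
\]
where the final inequality is exactly the bound $\alpha(k)\le 4$ proved in Theorem~\ref{th:4}. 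Dividing by the benchmark $(n-k^*)D^*\ge \frac{n}{2}D^*$ gives a ratio of at most $\frac{4nD^*}{(n/2)D^*}=8$, which is the claimed $8$-Sum-Delay-Competitiveness (interpreting the randomized mechanism's delay as its expectation, consistently with Theorem~\ref{th:4}).

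The only real obstacle is verifying that nothing breaks in the two slack steps. Upper-bounding each agent's delay by the global maximum $\max\{D_L,D_R\}$ is wasteful, and it is exactly the majority-free-rider assumption that keeps the benchmark $(n-k^*)D^*$ large enough to absorb that waste; without it the benchmark could shrink to a single free rider's delay and no constant ratio would survive. I would also double-check that the tie and degenerate cases ($k_L=0$ or $k_R=0$, and $D_L=D_R$) are already handled by the $\max\{1,\cdot\}$ convention inherited from the max-delay proof, so that no separate argument is required here.
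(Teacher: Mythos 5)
Your proposal is correct and follows essentially the same route as the paper: bound every agent's delay under the group-based mechanism by the larger group deadline, invoke the $4D^*$ bound from Theorem~\ref{th:4} to get a sum-delay of at most $4nD^*$, and divide by the benchmark $(n-k^*)D^*\ge \frac{n}{2}D^*$ supplied by the majority-free-rider assumption. The only difference is that you unpack the expectation over groupings via $\alpha(k^*)$ explicitly, where the paper simply cites the theorem's conclusion; the substance is identical.
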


\begin{proof}
    Let $D^*$ and $k^*$ be the optimal deadline and the number of agents who
    join the cost sharing under the optimal deadline mechanism.  The Sum-Delay of the
    agents under the optimal deadline mechanism is $(n-k^*)D^*$.  Under the group-based optimal deadline mechanism, the
    deadlines are at most $4D^*$ according to Theorem~\ref{th:4}.  The
    Sum-Delay is then at most $4D^*n$.  Therefore, the competitive
    ratio is $\frac{4n}{n-k^*}$, which is at least $8$ if $k^*\le n/2$.
\end{proof}

\section{Conclusion and Future Work}

In this paper, we studied a binary excludable public project model where the mechanism can specify when an agent can start consuming the project.
The waiting time faced by an agent is called her delay. Delay, which we aim to minimize, is only implemented to incentivize enough payments to cover the project cost.
We designed various mechanisms to minimize the maximum delay and the total delay among all agents.
For both expected delay objectives, we proved that a
near-optimal single deadline mechanism exists when there are a large number of
agents, irrespective of prior distributions. Besides, for a smaller number of
agents, we proposed an automated mechanism design approach based on evolutionary computation.
For prior-free settings, we proposed the group-based optimal deadline mechanism, which was shown to be competitive against an undominated mechanism.

As the problem setting is rather new, there are plenty of options to be
explored when designing mechanisms with better performance. Possible solutions
showing promise include, for exmaple, another method we considered but did not
dedicate as much time into---scheduling fixed prices for different sections of
time periods, regardless of the agents' submitted valuations. But such a
mechanism will require extensive simulations and analyses to evaluate its
performance.
% It should also be noted that the lack of data for such simulations
% is to be addressed.

% %% Max: I wonder whether the first sentence is not contradicting itself by the definition of prior-free.

For our results presented under the prior-free settings, we made a
certain number of assumptions, some of which easily hold true for realistic
applications---and therefore rather natural---some of which less so. For
example, there is an assumption that there is at least one agent who is
not willing to participate in the cost sharing under the optimal deadline
mechanism. This is necessary because otherwise the optimal deadline mechanism's
delay is 0 and no strategy-proof mechanisms can match it.
This assumption can also be easily satisfied
by including free riders who are determined not to contribute at all.
Either removing existing constraints to generalize
the solution or adding more assumptions to yield better results would be
reasonable as immediate future work.

% \newpage
% \bibliographystyle{spbasic}
\bibliographystyle{theapa}
\bibliography{/home/mingyu/nixos/newmg.bib}
\end{document}